\documentclass[letterpaper,onecolumn,11pt,accepted=2021-08-30,allowtoday]{quantumarticle}

\pdfoutput=1
\usepackage[T1]{fontenc}
\usepackage{authblk}
\usepackage{amsfonts}
\usepackage{amsmath}
\usepackage{amssymb}
\usepackage{amsthm}
\usepackage{graphicx}
\usepackage[colorlinks=true,plainpages=false,pdfpagelabels]{hyperref}
\usepackage{color}
\usepackage[dvipsnames]{xcolor}
\usepackage{mathtools}
\usepackage{bbm}
\usepackage{float}
\usepackage{array}
\usepackage{verbatim}
\usepackage{enumitem}
\usepackage{xfrac}
\usepackage[title,titletoc]{appendix}

\usepackage{anyfontsize}  
\usepackage{microtype} 

\allowdisplaybreaks

\makeatletter
\g@addto@macro\bfseries{\boldmath}
\makeatother

\newcommand{\bra}[1]{\langle #1|}
\newcommand{\ket}[1]{|#1\rangle}

\newcommand{\abs}[1]{\left|#1\right|}

\newcommand{\Tr}{\operatorname{Tr}}

\numberwithin{equation}{section}  

\addtocontents{toc}{\protect\setcounter{tocdepth}{2}}

\RequirePackage[backend=biber,style=phys,eprint=true,arxiv=abs,doi=false,articletitle=true,biblabel=brackets,chaptertitle=false,pageranges=false,bibencoding=utf8]{biblatex}
\addbibresource{/home/sumeet/Documents/refs_main.bib}
\appto{\bibsetup}{\sloppy} 

\urlstyle{same} 

\DeclarePairedDelimiter{\floor}{\lfloor}{\rfloor}

\DeclareMathOperator*{\argmax}{arg\,max}

\theoremstyle{definition}
\newtheorem{theorem}{Theorem}[section]
\newtheorem{lemma}[theorem]{Lemma}
\newtheorem{corollary}[theorem]{Corollary}
\newtheorem{definition}[theorem]{Definition}

\newtheorem{proposition}[theorem]{Proposition}
\newtheorem{remark}[theorem]{Remark}

\newcommand{\defqed}{$\blacktriangleleft$}
\newcommand{\defqedspec}{\blacktriangleleft}

\setlength{\parskip}{3mm}
\setlength{\fboxsep}{3mm}



\begin{document}

\title{Policies for elementary links in a quantum network}

\author{Sumeet Khatri}
\affiliation{Hearne Institute for Theoretical Physics, Department of Physics and Astronomy, and Center for Computation and Technology, Louisiana State University, Baton Rouge, Louisiana, 70803, USA}
\orcid{0000-0002-9858-0511}
\email{khatri6000@gmail.com}
\homepage{http://sumeetkhatri.com}

\date{\today}

\maketitle

\begin{abstract}

	Distributing entanglement over long distances is one of the central tasks in quantum networks. An important problem, especially for near-term quantum networks, is to develop optimal entanglement distribution protocols that take into account the limitations of current and near-term hardware, such as quantum memories with limited coherence time. We address this problem by initiating the study of quantum network protocols for entanglement distribution using the theory of decision processes, such that optimal protocols (referred to as \textit{policies} in the context of decision processes) can be found using dynamic programming or reinforcement learning algorithms. As a first step, in this work we focus exclusively on the elementary link level. We start by defining a quantum decision process for elementary links, along with figures of merit for evaluating policies. We then provide two algorithms for determining policies, one of which we prove to be optimal (with respect to fidelity and success probability) among all policies. Then we show that the previously-studied memory-cutoff protocol can be phrased as a policy within our decision process framework, allowing us to obtain several new fundamental results about it. The conceptual developments and results of this work pave the way for the systematic study of the fundamental limitations of near-term quantum networks, and the requirements for physically realizing them.
	
\end{abstract}

\tableofcontents

\section{Introduction}

	The quantum internet \cite{Kim08,Sim17,Cast18,WEH18,Dowling_book2} is one of the frontiers of quantum information science. It has the potential to revolutionize the way we communicate and do other tasks, and it will allow for tasks that are not possible using the current, classical internet alone, such as quantum teleportation \cite{BBC+93,Vaidman94,BFK00}, quantum key distribution \cite{BB84,Eke91,GRG+02,SBPC+09}, quantum clock synchronization \cite{JADW00,Preskill00,UD02,ITDB17}, distributed quantum computation \cite{CEHM99}, and distributed quantum metrology and sensing \cite{KKB+14,DRC17,ZZS18,EFM+18,PKD18,XZCZ19}. The backbone of a quantum internet is entanglement distributed globally in order to allow for such novel applications to be performed over long distances. Consequently, long-range entanglement distribution is one of the main problems in quantum networks. 
	
	Most of the aforementioned applications are beyond the reach of current and near-term quantum technologies. Consequently, we are currently in the era of so-called \textit{near-term quantum networks}, which are characterized by the following elements \cite{WEH18}: Small number of nodes; imperfect sources of entanglement; non-deterministic elementary link generation and entanglement swapping; imperfect measurements and gate operations; quantum memories with short coherence times; no (or limited) entanglement distillation/error correction. The most prominent applications of these near-term quantum networks are quantum teleportation and quantum key distribution. In fact, several experiments have already realized these applications on relatively small scales \cite{MMO+07,MOH+09,PPA+09,CWL+10,MP10,SLB+11,SFI+11,Ritter+12,HKO+12,BHP+13,WCY+14,DSG+16,SSH+17,Kalb+17,YCL17,Hump+18,BLL+18,ZXCPP18,SNN+20,PHB+21}. 
	
	What are the requirements for physically realizing near-term quantum networks? More generally, what are the limitations of near-term quantum networks? Although several software tools for simulating quantum networks have been released in order to probe these questions \cite{MMG+15,DW18,Bart18,Mat19,DNZB20,WKC+20,CKD+20}, it is of interest to develop a formal and systematic theoretical framework for entanglement distribution protocols in near-term quantum networks that can allow us to address these questions in full generality. Such a theoretical framework, which is currently lacking (see Appendix~\ref{app-related_work} for a review of prior theoretical work on quantum networks), should incorporate both the limitations of near-term quantum technologies and be general enough to allow for optimization of protocol parameters. It should also allow us to answer the following basic questions for arbitrary protocols: (1) What is the quantum state of the network? (2) What is the fidelity of the quantum state of the network with respect to a given target state? (3) What protocol is optimal with respect to fidelity (or some other figure of merit)? The purpose of this work is to answer these questions at the level of elementary links in a quantum network, as a first step towards the development of a general framework for practical quantum network protocols. We do so by introducing a general framework for analyzing elementary links in a quantum network based on quantum decision processes; see Figure~\ref{fig-net_agent_env_0}. In a decision process \cite{Put14_book}, an agent interacts with its environment through actions, and it receives rewards from the environment based on these actions. The goal of the agent is to devise a policy (a sequence of actions) that maximizes its expected total reward.

	\begin{figure}
		\centering
		\includegraphics[width=0.80\textwidth]{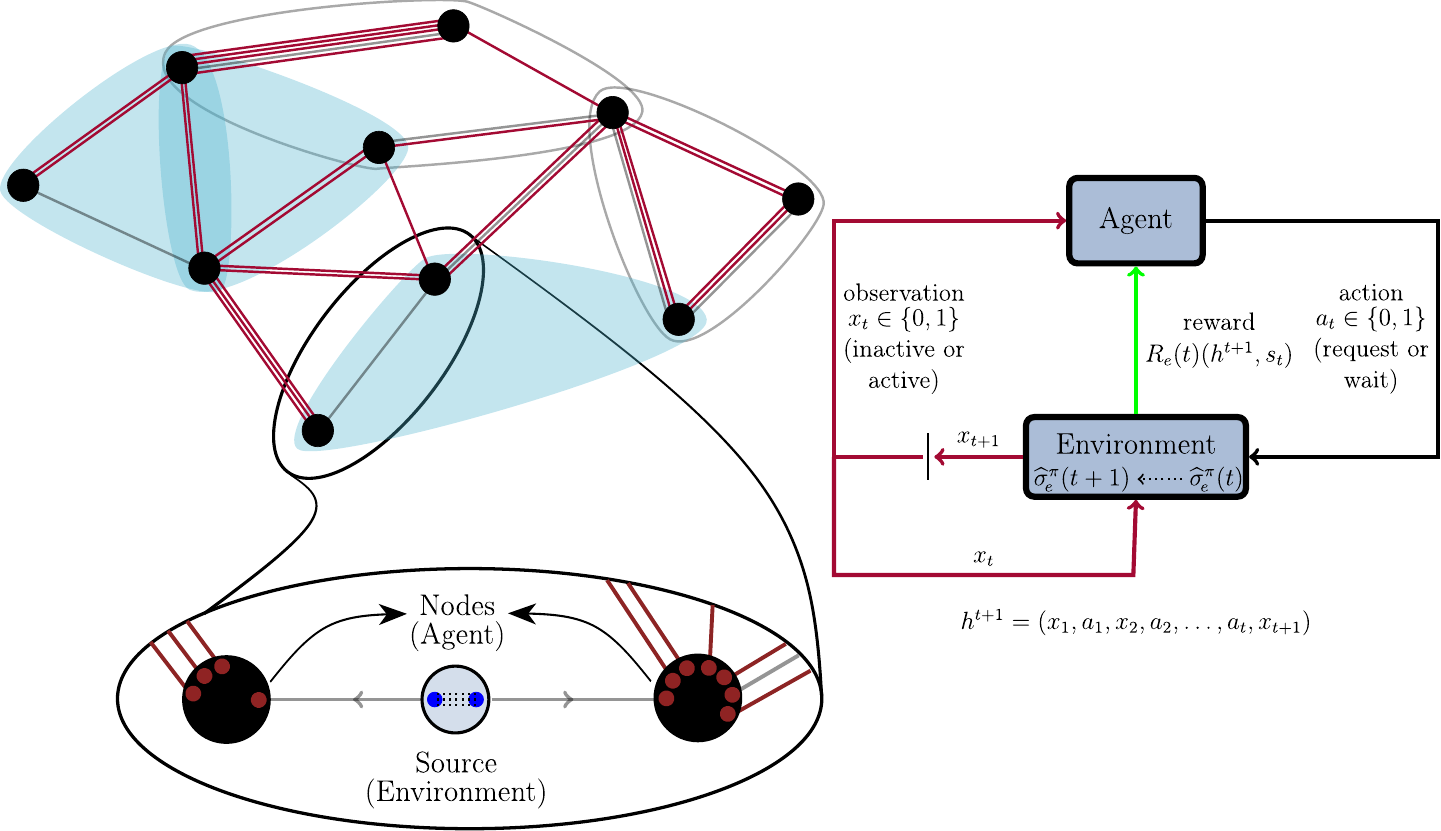}
		\caption{(Left) Model of elementary link generation considered in this work (see Section~\ref{sec-ent_dist_task} and Appendix~\ref{sec-elem_link_generation} for details). We associate every quantum network with a graph $G=(V,E)$, in which the vertices $V$ represent the network nodes and the edges $E$ represent quantum channels, which we refer to as \textit{elementary links}. We associate a source station to the elementary link corresponding to $e\in E$, which distributes entangled states to the nodes belonging to $e$. (Right) In order to analyze an elementary link with respect to time, we define a quantum decision process; see the beginning of Section~\ref{sec-elem_link_QDP}, and Appendix~\ref{app-elem_link_QDP}, for details.} 
\label{fig-net_agent_env_0}
	\end{figure}
	
	To see why decision processes are a natural way to describe protocols in a quantum network, consider a very simple example. Consider three nodes labeled $A$, $B$, and $C$. Quantum channels connect $A$ and $C$ and $B$ and $C$, and the goal is to create entanglement between $A$ and $B$. The usual protocol to achieve this goal is to first create entanglement between $A$ and $C$ and $B$ and $C$, and then to perform entanglement swapping at $C$. However, because entanglement creation is typically non-deterministic with near-term quantum technologies, it is possible that, e.g., the entanglement between $A$ and $C$ is created first. With near-term quantum memories, storing this entangled state for too long while entanglement between $B$ and $C$ is being created can lead to too much decoherence, rendering the final, swapped state between $A$ and $B$ useless. A decision must therefore be made to either wait (i.e., keep the entanglement between $A$ and $C$ in quantum memory) or to discard the entanglement shared by $A$ and $C$ and create the entanglement again. The framework of decision processes provides us with the language and mathematical tools needed to address this problem for arbitrary networks, not just a network of three nodes. 
	
	In addition to being natural, one of the other advantages of the approach taken in this work is that optimal protocols can be discovered using reinforcement learning algorithms. This is due to the fact that decision processes form the theoretical foundation for reinforcement learning \cite{Sut18_book} and artificial intelligence \cite{RN09_book}. (See \cite{WMDB19} for related work on machine learning for quantum communication.) Another advantage of our approach is that, even though reinforcement learning techniques cannot always be applied efficiently to large-scale problems, decision processes provide us with a systematic framework for combining optimal small-scale protocols in order to create large-scale protocols; see \cite{JTKL07} for similar ideas. The framework introduced in this work can also be extended to allow for a systematic consideration of agents with local and global knowledge of the network, as well as agents that are independent and/or cooperate with each other. These extensions are interesting directions for future work, and they will lead to a more complete theory of practical quantum network protocols. This work represents the starting point towards this ultimate goal.

\section{The entanglement distribution task}\label{sec-ent_dist_task}

	In the left panel of Figure~\ref{fig-net_agent_env_0}, we illustrate an arbitrary quantum network using its corresponding hypergraph $G=(V,E)$. Specifically, $G$ corresponds to the physical layout of the network, which we assume to be fixed. The nodes of the network are associated to the vertices $V$ of the graph, and quantum channels physically connecting the nodes in the network are represented in the graph by gray edges, belonging to the set $E$, connecting the corresponding vertices. We refer to these edges as \textit{elementary links}. The quantum channels are used to distribute entangled states to the nodes of an elementary link. For two-node elementary links, the associated quantum channels are used to distribute bipartite entangled states; for elementary links with three or more nodes, the associated quantum channels are used to distribute multipartite entangled states. When an entangled state is distributed successfully along an elementary link, we color the corresponding edge red (in the case of two-node elementary links) or blue (in the case of elementary links with three or more nodes), and we refer to the elementary link as an \textit{active elementary link}. The task of \textit{entanglement distribution} is to use the active elementary links to create \textit{virtual links}. A virtual link corresponds to an entangled state shared by nodes that are not physically connected. Entanglement distribution protocols can be described in terms of graph transformations, as done in \cite{SMI+17,CRDW19}, which take the graph $G$ and transform it into a target graph $G_{\text{target}}$, whose edges contain a subset of the elementary links in $G$ along with a desired set of virtual links. The quantum states corresponding to the elementary and virtual links must be close (in terms of, e.g., fidelity) to a target quantum state. Typically, in the bipartite case, the target is a Bell state, while in the multipartite case the target is a GHZ state.
	
	A basic example of an entanglement distribution protocol, and the one that we consider in this work, consists of first generating elementary links, and then performing joining protocols, such as entanglement swapping, to create virtual links. If the active elementary links obtained after the first step, along with their fidelities to the target states, do not allow for the target graph to be created---for example, some of the required elementary link attempts might have failed---then it makes sense to retry the elementary link generation for the ones that failed. For the ones that succeeded, it might make sense to keep the quantum states in memory rather than discard the states, request new ones, and risk some of these new attempts failing. This sequence of decisions at every time step defines a \textit{policy}. Some questions then naturally arise:
	\begin{enumerate}
		\item Given a policy for an elementary link, what is the quantum state of the elementary link as a function of time? We address this question in Section~\ref{sec-q_state_elem_link}.
		
		\item Given a policy for an elementary link, what is the expected fidelity of the quantum state of the elementary link with respect to the target state, as a function of time, and what is the probability that the elementary link is active after a given number of time steps? We address this question in Section~\ref{sec-figures_of_merit_general}.
		
		\item What is the (optimal) sequence of actions (i.e., the optimal policy) that should be performed for every elementary link, as a function of time? We address this question in Section~\ref{sec-policy_opt}.
	\end{enumerate}
	These are the main questions that we address in this work, and we do so using the theory of quantum decision processes.

\section{Quantum decision process for elementary links}\label{sec-elem_link_QDP}

	Let us now illustrate how the three questions posed at the end of the previous section can be answered using quantum decision processes. The basic idea is illustrated in Figure~\ref{fig-net_agent_env_0}, and we present the formal definition of the quantum decision process in Appendix~\ref{app-elem_link_QDP}. Let $G=(V,E)$ be the graph corresponding to the elementary links of a quantum network. To every edge $e\in E$ of the graph we associate an independent agent. The agent should be thought of as a collection of (classical) devices located at the nodes corresponding to the edge, which can communicate with each other and thus operate as a single entity. The environment associated with the agent is the collection of quantum systems distributed to the corresponding nodes by a source station.
	
	Now, at $t=0$, an attempt is made to generate entanglement along an elementary link corresponding to $e\in E$. This means that the source station associated with $e$ prepares a multipartite entangled state and sends the corresponding quantum systems to the nodes of $e$. There are two key elements of this elementary link generation process (see Appendix~\ref{sec-elem_link_generation} for details):
	\begin{itemize}
		\item The \textit{elementary link generation success probability} $p_e\in[0,1]$, which is a function of the transmissivity of the transmission medium (in the case of photonic implementations) and parameters that quantify imperfections in the local gates and measurement devices.
		
		\item Depending on success or failure of elementary link generation, the nodes of $e$ can be in one of two quantum states, by definition: $\rho_e^0$ in the case of success, and $\tau_e^{\varnothing}$ in the case of failure. In the case of success, the quantum state is held in quantum memories at the nodes. Then, the quantum state after $m$ time steps in the local quantum memories is given by
			\begin{equation}\label{eq-elem_link_state_in_mem}
				\rho_{e}(m)\coloneqq\mathcal{N}_{e}^{\circ m}(\rho_{e}^0),
			\end{equation}
			where $\mathcal{N}_{e}^{\circ m}=\mathcal{N}_{e}\circ\mathcal{N}_{e}\circ\dotsb\circ\mathcal{N}_{e}$ ($m$ times), and $\mathcal{N}_e$ describes the noise processes of the local quantum memories; see Appendix~\ref{sec-elem_link_generation} for details.
	\end{itemize}
	
	If the elementary link generation succeeds at time $t=0$, then at time step $t=1$ the agent might decide to keep the quantum state currently in memory; if it fails, then at time step $t=1$ the agent might decide to perform the elementary link generation again. In general, then, at every time step $t\geq 1$, the agent associated with $e\in E$ can perform two actions: ``wait'' (i.e., keep the entangled state currently in quantum memory), or ``request'' (discard the entangled state currently in quantum memory and request a new one from the source). This choice of action can be random, so we define \textit{action} random variables $A_{e}(t)$ taking two values: 0 for ``wait'' and 1 for ``request''. Based on the agent's choice, the distribution of the quantum systems from the source station to the nodes probabilistically succeeds or fails (we expand on this in Appendix~\ref{sec-elem_link_generation}). We define \textit{elementary link status} random variables $X_{e}(t)$ to indicate the outcomes: 0 for failure and 1 for success. If $X_e(t)=0$, then the elementary link is considered \textit{inactive}, and it is considered \textit{active} if $X_e(t)=1$. The \textit{history} of the agent is then defined as the sequence $H_{e}(t)\coloneqq(X_{e}(1),A_{e}(1),X_{e}(2),A_{e}(2),\dotsc,A_{e}(t-1),X_{e}(t))$ for all $t\geq 1$, with $H_{e}(1)=X_{e}(1)$. Every realization of the history is a sequence of the form $h^t=(x_1,a_1,x_2,a_2,\dotsc,a_{t-1},x_t)$, with $x_j\in\{0,1\}$ for all $1\leq j\leq t$ and $a_j\in\{0,1\}$ for all $1\leq j\leq t-1$. Note that $\{0,1\}^{2t-1}$ is the set of all histories up to time $t$. We then think of $X_e(j)$ and $A_e(j)$ as functions such that $X_e(j)(h^t)=x_j$ for all $1\leq j\leq t$ and $A_e(j)(h^t)=a_j$ for all $1\leq j\leq t-1$ and $h^t\in\{0,1\}^{2t-1}$. We use $h_j^t\coloneqq(x_1,a_1,\dotsc,a_{j-1},x_j)$ to denote the history $h^t=(x_1,a_1,\dotsc,a_{t-1},x_t)$ up to the $j^{\text{th}}$ time step for all $1\leq j\leq t$, with $h_1^t=x_1$.
	
	Now, because the actions of waiting and requesting can be random, we define the random variable $M_e(t)$ to be the amount of time the quantum state of the elementary link corresponding to $e$ is held in memory. It satisfies the recursion relation
	\begin{equation}\label{eq-mem_time_def1}
		M_{e}(t)=\left\{\begin{array}{l l} M_{e}(t-1)+X_{e}(t) & \text{if }A_{e}(t-1)=0,\\ X_{e}(t)-1 & \text{if }A_{e}(t-1)=1, \end{array}\right.
	\end{equation}
	where $M_{e}(0)\equiv -1$ and $A_{e}(0)\equiv 1$. Intuitively, the quantity $M_{e}(t)$ is the number of consecutive time steps up to the $t^{\text{th}}$ time step that the action ``wait'' is performed since the most recent ``request'' action. The value $M_{e}(t)=-1$ can be thought of as the resting state of the quantum memory, when it is not loaded.
	
	The agent's \textit{policy}, i.e., the policy of the elementary link corresponding to the edge $e\in E$, is a sequence of the form $\pi_{e}=(d_1,d_2,\dotsc)$, where the \textit{decision functions} $d_t$ are defined as
	\begin{equation}
		d_t(h^t)(a_t)\coloneqq\Pr[A_{e}(t)=a_t|H_{e}(t)=h^t].
	\end{equation}
	In other words, the decision functions give us the probability that the agent takes a particular action, given the history of actions and statuses.
	
	
	Let $\{\pi_{e}:e\in E\}$ denote a collection of policies for all of the elementary links of the network. Then, the quantum state of the network at time steps $t\geq 1$ is 
	\begin{equation}\label{eq-network_cq_state_QDP}
		\bigotimes_{e\in E}\widehat{\sigma}_{\!e}^{\pi_{e}}(t),
	\end{equation}
	where $\widehat{\sigma}_{\!e}^{\pi_{e}}(t)$ is a classical-quantum state for the elementary link corresponding to $e\in E$, which has the form
	\begin{equation}\label{eq-elem_link_cq_state_QDP}
		\widehat{\sigma}_{\!e}^{\pi_{e}}(t)=\sum_{h^t\in\{0,1\}^{2t-1}}\Pr[H_{e}(t)=h^t]_{\pi_{e}}\ket{h^t}\bra{h^t}_{H_t}\otimes\sigma_{\!e}(t|h^t),
	\end{equation}
	where for every history $h^t=(x_1,a_1,x_2,a_2,\allowbreak\dotsc,a_{t-1},x_t)\in\{0,1\}^{2t-1}$,
	\begin{equation}
		\ket{h^t}_{H_t}\equiv\ket{x_1}_{X_1}\ket{a_1}_{A_1}\ket{x_2}_{X_2}\ket{a_2}_{A_2}\dotsb\ket{a_{t-1}}_{A_{t-1}}\ket{x_t}_{X_t}.
	\end{equation}
	This classical-quantum state captures both the history of the agent-environment interaction corresponding to the elementary link (in the classical register $H_t$), as well as the quantum state of the nodes belonging to the elementary link conditioned on a particular history. We present explicit expressions for $\Pr[H_e(t)=h^t]_{\pi_e}$ and $\sigma_{\!e}(t|h^t)$ in Section~\ref{sec-q_state_elem_link} below.
	
	The tensor product structure in \eqref{eq-network_cq_state_QDP} holds because, by assumption, all of the agents corresponding to the elementary links are independent. This means that all of the agents have knowledge only of the status of their own elementary link. We stick to this setting throughout this work. However, we can use quantum decision processes to develop quantum network protocols in which the agents can cooperate, so that they have knowledge of the network in a certain local neighborhood of their elementary link. The quantum state of the network would then not have the simple tensor product structure as in \eqref{eq-network_cq_state_QDP}. Furthermore, in the case of independent agents, we can use the quantum decision processes for the elementary links as building blocks for quantum decision processes for groups of elementary links, leading to more sophisticated quantum network protocols. We leave these investigations as interesting directions for future work.

\subsection{Quantum state of an elementary link}\label{sec-q_state_elem_link}

	We now state one of the main results of this work, which is an explicit expression for the probabilities $\Pr[H_{e}(t)=h^t]_{\pi}$ and the quantum states $\sigma_{\!e}(t|h^t)$ of the elementary link corresponding to the edge $e$ in a quantum network undergoing a policy $\pi=(d_1,d_2,\dotsc,d_t,\dotsc)$. This result gives us the answer to the first question posed at the end of Section~\ref{sec-ent_dist_task}.
	
	\begin{theorem}[Quantum state of an elementary link]\label{thm-link_quantum_state}
		Let $G=(V,E)$ be the graph corresponding to the elementary links of a quantum network, and let $e\in E$. For all $t\geq 1$, histories $h^t=(x_1,a_1,\dotsc,a_{t-1},x_t)\in\{0,1\}^{2t-1}$, and policies $\pi=(d_1,d_2,\dotsc,d_t,\dotsc)$, we have
		\begin{equation}\label{eq-link_state_unnormalized}
			\sigma_{\!e}(t|h^t)=x_t\,\rho_{e}\!\left(M_{e}(t)(h^t)\right)+(1-x_t)\tau_{e}^{\varnothing}.
		\end{equation}
		Furthermore,
		\begin{equation}\label{eq-hist_prob_general}
			\Pr[H_{e}(t)=h^t]_{\pi}=\left(\prod_{j=1}^{t-1} d_j(h_j^t)(a_j)\right) p_{e}^{N_{e}^{\text{succ}}(t)(h^t)}(1-p_{e})^{N_{e}^{\text{req}}(t)(h^t)-N_{e}^{\text{succ}}(t)(h^t)}
		\end{equation}
		for every history $h^t=(x_1,a_1,\dotsc,a_{t-1},x_t)\in\{0,1\}^{2t-1}$, where $p_e$ is the success probability, $h_j^t=(x_1,a_1,\dotsc,a_{j-1},x_j)$, and
		\begin{equation}
			N_{e}^{\text{req}}(t)\coloneqq\sum_{j=1}^t A_{e}(j-1),\quad N_{e}^{\text{succ}}(t)\coloneqq \sum_{j=1}^t A_{e}(j-1)X_{e}(j)
		\end{equation}
		are the number of elementary link requests and the number of successful elementary link requests, respectively, up to time~$t$, with $A_e(0)\equiv 1$.~\defqed
	\end{theorem}
	
	
	The expected quantum state of the elementary link corresponding to an edge $e\in E$ undergoing the policy $\pi$ is defined as the state obtained by tracing out the classical history register in \eqref{eq-elem_link_cq_state_QDP}:
	\begin{equation}\label{eq-link_avg_q_state_def}
		\sigma_{\!e}^{\pi}(t)\coloneqq\sum_{h^t\in\{0,1\}^{2t-1}} \Pr[H_e(t)=h^t]_{\pi}\,\sigma_{\!e}(t|h^t).
	\end{equation}
	Using Theorem~\ref{thm-link_quantum_state}, we immediately obtain the following result.
	
	\begin{corollary}[Expected quantum state of an elementary link]\label{cor-link_avg_q_state}
		Let $G=(V,E)$ be the graph corresponding to the elementary links of a quantum network, and let $e\in E$. For all $t\geq 1$ and for every policy $\pi$, the expected quantum state of the elementary link corresponding to $e$ is
		\begin{equation}\label{eq-link_avg_q_state}
			\sigma_{\!e}^{\pi}(t)=(1-\Pr[X_{e}(t)=1]_{\pi})\tau_{e}^{\varnothing}+\sum_{m}\Pr[X_{e}(t)=1,M_{e}(t)=m]_{\pi}\,\rho_{e}(m),
		\end{equation}
		where the sum is with respect to all possible values of the memory time, which in general depends on the policy $\pi$.~\defqed
	\end{corollary}
	
	

\subsection{Figures of merit}\label{sec-figures_of_merit_general}

	
	
	Consider an edge $e\in E$ in the graph $G=(V,E)$ corresponding to the elementary links of a quantum network, and let $\pi$ be a policy for the elementary link corresponding to $e$. Having determined the quantum state of the elementary link corresponding to $e$, let us now consider the following figures of merit to evaluate the policy $\pi$.
	\begin{itemize}
		\item The probability that an elementary link is active at time $t\geq 1$, i.e., $\Pr[X_e(t)=1]_{\pi}=\mathbb{E}[X_e(t)]_{\pi}$. Due to the latter equality, we also refer to his quantity as the expected elementary link status.
		
		\item The expected fidelity of the quantum state of the elementary link with respect to a target quantum state at time $t\geq 1$, i.e., $\mathbb{E}[\widetilde{F}_{\!e}(t)]_{\pi}$, where
			\begin{equation}\label{eq-elem_link_fid_RV}
				\widetilde{F}_{\!e}(t)\coloneqq X_e(t)f_e(M_e(t)),
			\end{equation}
			where 
			\begin{equation}\label{eq-fidelity_decay}
				f_{e}(m)\coloneqq \bra{\psi}\rho_{e}(m)\ket{\psi}=\bra{\psi}\mathcal{N}_{e}^{\circ m}(\rho_{e}^0)\ket{\psi}
			\end{equation}
			denotes the fidelity of the state $\rho_e(m)$ with respect to a pure target state $\psi=\ket{\psi}\bra{\psi}$.
			
			A related quantity is
			\begin{equation}\label{eq-elem_link_fid_conditional}
				\mathbb{E}[F_{\!e}(t)]_{\pi}\coloneqq\frac{\mathbb{E}[\widetilde{F}_{\!e}(t)]_{\pi}}{\Pr[X_e(t)=1]_{\pi}},
			\end{equation}
			which can be thought of as the expected fidelity of the quantum state of the elementary link \textit{given} that the elementary link is active.
	\end{itemize}
	We discuss other figures of merit of interest in Appendix~\ref{sec-other_figs_of_merit}.
	
	The expected status and the expected fidelity of an elementary link can both be expressed in a simple manner in terms of the classical-quantum state of the elementary link.

	\begin{theorem}\label{thm-succ_prob_fid_cq}
		Let $G=(V,E)$ be the graph corresponding to the elementary links of a quantum network, let $e\in E$, and let $t\geq 1$. Given a policy $\pi$, as well as a target pure state $\psi$, for the elementary link corresponding to $e$, we have
		\begin{align}
			\mathbb{E}[X_e(t)]_{\pi}&=\Tr[\ket{1}\bra{1}_{X_t}\widehat{\sigma}_{\!e}^{\pi}(t)]=\Pr[X_e(t)=1]_{\pi},\label{eq-link_value_prob_via_state}\\
			\mathbb{E}[\widetilde{F}_{\!e}(t)]_{\pi}&=\Tr[(\ket{1}\bra{1}_{X_t}\otimes\psi)\widehat{\sigma}_{\!e}^{\pi}(t)].\label{eq-avg_Ftilde} \defqedspec
		\end{align}
	\end{theorem}
	

	Theorem~\ref{thm-succ_prob_fid_cq} answers the second question posed at the end of Section~\ref{sec-ent_dist_task}.

\subsection{Examples of policies}

	Let us examine the figures of merit defined above using two simple policies.

	First, consider the policy consisting of the action ``request'' at every time step before the elementary link becomes active, and the action ``wait'' at every time step after the elementary link becomes active. This policy is defined by $\pi=(d_1,d_2,\dotsc)$, where
	\begin{equation}
		d_t(h^t)=\left\{\begin{array}{l l} 1 & \text{if }X_e(t)(h^t)=0, \\ 0 & \text{if }X_e(t)(h^t)=1, \end{array}\right.
	\end{equation}
	for all $t\geq 1$ and every history $h^t\in\{0,1\}^{2t-1}$. This policy achieves the highest value of $\mathbb{E}[X_e(t)]_{\pi}$ for all $t\geq 1$. In fact, this policy is simply the $t^{\star}=\infty$ memory-cutoff policy, which we investigate in Section~\ref{sec-mem_cutoff_policy}. We show in that section that $\mathbb{E}[X_e(t)]_{\pi}=1-(1-p_e)^t$ for all $t\geq 1$. Of course, this highest value of $\mathbb{E}[X_e(t)]_{\pi}$ comes at the cost of a lower fidelity, because each ``wait'' action decreases the fidelity of the quantum state stored in memory; we see this explicitly in Section~\ref{sec-mem_cutoff_fidelity}.
	
	Another policy is one in which the action ``request'' is taken at every time step, i.e., $d_t(h^t)=1$ for all $t\geq 1$ and every history $h^t\in\{0,1\}^{2t-1}$. (This is the $t^{\star}=0$ memory-cutoff policy; see Section~\ref{sec-mem_cutoff_policy}.) In this case, the quantity $\mathbb{E}[F_{\!e}(t)]_{\pi}$ is maximized, because $\mathbb{E}[F_{\!e}(t)]_{\pi}=f_e(0)$ for every time step $t\geq 1$, which is the highest that can be obtained (without entanglement distillation). This highest value of the fidelity comes at the cost of a lower success probability, because the probability that the elementary link is active stays at $p_e$ for all times with respect to this policy, i.e., $\Pr[X_e(t)=1]_{\pi}=p_e$ for all $t\geq 1$ if at every time step the agent requests a link.
	
	The two policies considered above illustrate the trade-off between the expected status $\mathbb{E}[X_e(t)]_{\pi}$ and the expected fidelity $\mathbb{E}[F_{\!e}(t)]_{\pi}$ of an elementary link. The quantity $\mathbb{E}[\widetilde{F}_{\!e}(t)]_{\pi}$, with $\widetilde{F}_{\!e}(t)$ defined in \eqref{eq-elem_link_fid_RV}, incorporates this trade-off, as it can be thought of intuitively as the product of the status and fidelity of an elementary link. Let us therefore now turn to finding policies that maximize the quantity $\mathbb{E}[\widetilde{F}_{\!e}(t)]_{\pi}$ as a function of time. 

\subsection{Policy optimization}\label{sec-policy_opt}

	We would like to understand the highest value that $\mathbb{E}[\widetilde{F}_{\!e}(t)]_{\pi}$ can take as a function of policies $\pi$ and times $t\geq 1$, given a particular entanglement generation success probability $p_e$ (and thus, given particular values of the physical parameters that comprise the success probability). With this, we can answer the third question posed at the end of Section~\ref{sec-ent_dist_task}, and more broadly, we can begin to understand the limits of practical, near-term quantum networks.

\subsubsection{Backward recursion}

	Optimization of $\mathbb{E}[\widetilde{F}_{\!e}(t)]_{\pi}$ with respect to policies $\pi$ for a given elementary link is given by the following recursive procedure.
	
	\begin{theorem}[Optimal policy for an elementary link]\label{thm-opt_policy}
		Let $G=(V,E)$ be the graph corresponding to the elementary links of a quantum network, let $e\in E$, and let $\psi\equiv\ket{\psi}\bra{\psi}$ be a pure target state. Then, for all $T\geq 1$,
		\begin{equation}
			\max_{\pi}\mathbb{E}[\widetilde{F}_{\!e}(T+1)]_{\pi}=\sum_{x_1=0}^1\max_{a_1\in\{0,1\}}w_{2}(x_1,a_1),
		\end{equation}
		where
		\begin{align}
			w_{t}(h^{t-1},a_{t-1})&=\sum_{x_{t}=0}^1\max_{a_{t}\in\{0,1\}}w_{t+1}(h^{t-1},a_{t-1},x_t,a_t) \quad\forall~2\leq t\leq T,\label{eq-pol_opt_intermediate}\\[0.2cm]
			w_{T+1}(h^{T},a_{T})&=p_{e}^{y}(1-p_{e})^{x-y} f_e\!\left(M_e(T+1)(h^T,a_T,1)\right),\label{eq-pol_opt_final}
		\end{align}
		and $y=N_{e}^{\text{succ}}(T+1)(h^T,a_T,1)$, $x=N_{e}^{\text{req}}(T+1)(h^T,a_T,1)$. Furthermore, the optimal policy is deterministic and given by
		\begin{equation}\label{eq-backward_recursion_policy}
			d_t^{\text{BR}}(h^t)\coloneqq\argmax_{a_t\in\{0,1\}}w_{t+1}(h^t,a_t)\quad \forall~1\leq t\leq T. \defqedspec
		\end{equation}
	\end{theorem}
	
	
	Observe that in the recursive procedure presented in Theorem~\ref{thm-opt_policy}, we must first determine the optimal action at the final time step and then determine the optimal actions at the previous time steps in turn. For this reason, the recursive procedure is often known as \textit{backward recursion}. Note that this recursive procedure is exponentially slow in the final time because of the fact that the number of histories grows exponentially with time---the number of histories up to time $t$ is $|\{0,1\}^{2t-1}|=2^{2t-1}$. For this reason, it is useful to have efficient methods for estimating the maximum fidelity of an elementary link. One such method is forward recursion. 

\subsubsection{Forward recursion}

	Instead of starting from the final time step and finding the optimal actions by going backwards, we could instead find optimal actions by going forwards, i.e., by selecting the action such that the immediate expected fidelity is maximized. Such a ``forward recursion'' approach is more natural from the perspective of a real-world learning agent, who has to make decisions in real time and does not necessarily have complete knowledge of the environment in order to perform the backward recursion algorithm. However, the forward recursion algorithm will not necessarily lead to a globally optimal policy. In fact, the globally optimal policy can be obtained using the backward recursion algorithm, which is the result of Theorem~\ref{thm-opt_policy}. Nevertheless, it is worthwhile to briefly discuss the forward recursion algorithm because many reinforcement learning algorithms are based on it, and they give efficiently computable lower bounds on the maximum expected fidelity of an elementary link.

	\begin{theorem}[Policy optimization via forward recursion]\label{thm-network_QDP_forward_recursion_policy}
		Let $G=(V,E)$ be the graph corresponding to the elementary links of a quantum network, and let $e\in E$. The forward recursion algorithm gives rise to the deterministic policy $(d_1^{\text{FR}},d_2^{\text{FR}},\dotsc)$ for the elementary link corresponding to $e$, where
		\begin{equation}\label{eq-network_QDP_forward_recursion_policy}
			d_t^{\text{FR}}(h^t)\coloneqq\left\{\begin{array}{l l} 1 & \text{if }x_t=0,\\[0.2cm] 0 & \text{if }x_t=1\text{ and }f_e\!\left(M_{e}(t)(h^t)+1\right)>p_{e} f_e(0),\\[0.2cm] 1 & \text{if } x_t=1\text{ and }f_e\!\left(M_{e}(t)(h^t)+1\right)\leq p_{e} f_e(0), \end{array}\right.
		\end{equation}
		for all $t\geq 1$ and all $h^t\in\{0,1\}^{2t-1}$, where $p_e$ is the success probability for the elementary link corresponding to $e$.~\defqed
	\end{theorem}
	

\section{The memory-cutoff policy}\label{sec-mem_cutoff_policy}
	
	In Section~\ref{sec-elem_link_QDP}, we defined a quantum decision process for elementary links in a quantum network. We determined results for the quantum state of an elementary link for arbitrary policies, defined figures of merit to evaluate policies (in particular, the expected fidelity), and presented algorithms for determining optimal policies. Let us now consider an explicit example of a policy.
	
	A natural policy to consider, and one that has been considered extensively previously \cite{CJKK07,SRM+07,SSM+07,BPv11,JKR+16,DHR17,RYG+18,SSv19,KMSD19,SJM19,LCE20}, is the following deterministic policy. An elementary link is requested at every time step until it becomes active, and once it is active it is held in quantum memory for some pre-specified amount $t^{\star}$ of time steps (usually called the \textit{memory cutoff} and not necessarily equal to the memory coherence time) after which the quantum state of the elementary link is discarded and requested again. The cutoff $t^{\star}$ can be any value in the set $\mathbbm{N}_0\cup\{\infty\}$, where $\mathbbm{N}_0=\{0,1,2,\dotsc\}$. There are two extreme cases of this policy: when $t^{\star}=0$, a request is made at \textit{every} time step regardless of whether the previous request succeeded; if $t^{\star}=\infty$, then an elementary link request is made at every time step until the elementary link becomes active, and once it becomes active the corresponding entangled state remains in memory indefinitely---no further request is ever made. In this section, we provide a complete analysis of this policy for all values of $t^{\star}\in\mathbbm{N}_0\cup\{\infty\}$ using the general developments in Section~\ref{sec-elem_link_QDP}. Details of the analysis can be found in Appendix~\ref{app-mem_cutoff_details}.
	
	Throughout this section, we consider an arbitrary elementary link in a quantum network, specified by the edge $e\in E$ in the graph $G=(V,E)$ corresponding to the elementary links of the network.
	
	\smallskip
	
	\begin{definition}[Memory-cutoff policy]\label{def-mem_cutoff_policy}
		Given a cutoff $t^{\star}\in\mathbbm{N}_0\cup\{\infty\}$, the \textit{$t^{\star}$ memory-cutoff policy} is the sequence $(d_1^{t^{\star}},d_2^{t^{\star}},\dots)$ given by the following for all $t\geq 1$ and every history $h^t\in\{0,1\}^{2t-1}$.
		\begin{align}
			&\bullet\quad\text{If }t^{\star}=0\,\,:\,\,d_t^0(h^t)\coloneqq 1.\\
			&\bullet\quad\text{If }t^{\star}\in\mathbb{N}\,\,:\,\,d_t^{t^{\star}}(h^t)\coloneqq\left\{\begin{array}{l l} 0 & \text{if } 0\leq M_e(t)(h^t)\leq t^{\star}-1, \\ 1 & \text{if }M_e(t)(h^t)=-1\text{ or }M_e(t)(h^t)=t^{\star}. \end{array}\right.\label{eq-mem_cutoff_policy_finite}\\[0.3cm]
			&\bullet\quad\text{If }t^{\star}=\infty\,\,:\,\,d_t^{\infty}(h^t)\coloneqq\left\{\begin{array}{l l} 0 & \text{if } M_e(t)(h^t)\geq 0, \\ 1 & \text{if }M_e(t)(h^t)=-1. \defqedspec \end{array}\right. \label{eq-mem_cutoff_policy_infty}
		\end{align}
	\end{definition}

\subsection{Expected quantum state}
	
	Recall from \eqref{eq-link_avg_q_state} that the expected quantum state of an elementary link in a quantum network undergoing a policy $\pi$ is given by the probabilities $\Pr[X_e(t)=1,M_e(t)=m]_{\pi}$ and $\Pr[X_e(t)=1]_{\pi}$. We now provide analytic expressions for these probabilities, which we denote by $\Pr[X_e(t)=1,M_e(t)=m]_{t^{\star}}$ and $\Pr[X_e(t)=1]_{t^{\star}}$, in the case of the memory-cutoff policy for all possible values of the cutoff $t^{\star}$. We consider both the short-term ($t<\infty$) and the long-term ($t\to\infty$) behavior of these probabilities.

\subsubsection{Short-term behavior}

	In the short term, we obtain the following result for the joint probability distribution of the elementary link status $X_e(t)$ and memory time $M_e(t)$ random variables.
		
	\begin{theorem}\label{thm-mem_status_pr}
		Let $p_e\in[0,1]$ be the success probability for an elementary link in a quantum network, as defined in Section~\ref{sec-q_state_elem_link}, and let $t\geq 1$.
		\begin{itemize}
			\item For all $t^{\star}\in\mathbb{N}_0\cup\{\infty\}$, if $t\leq t^{\star}+1$ and $m\in\{-1,0,1,\dotsc,t-1\}$, then
				\begin{equation}\label{eq-mem_status_pr_a}
					\Pr[M_e(t)=m,X_e(t)=1]_{t^{\star}}=p_e(1-p_e)^{t-(m+1)}(1-\delta_{m,-1}).
				\end{equation}
				If $t>t^{\star}+1$ and $m\in\{-1,0,1,\dotsc,t^{\star}\}$, then
				\begin{multline}\label{eq-mem_status_pr}
					\Pr[M_e(t)=m,X_e(t)=1]_{t^{\star}}\\=(1-\delta_{m,-1})\sum_{x=0}^{\floor{\frac{t-1}{t^{\star}+1}}} \binom{t-(m+1)-xt^{\star}}{x}\boldsymbol{1}_{t-(m+1)-x(t^{\star}+1)\geq 0}\\\times p_e^{x+1} (1-p_e)^{t-(m+1)-x(t^{\star}+1)},
				\end{multline}
				where
				\begin{equation}
					\boldsymbol{1}_{t-(m+1)-x(t^{\star}+1)\geq 0}=\left\{\begin{array}{l l} 1 & \text{if }t-(m+1)-x(t^{\star}+1)\geq 0, \\ 0 & \text{otherwise.}\end{array}\right.
				\end{equation}
				
			\item For all $t^{\star}\in\mathbb{N}_0$ and $m\in\{-1,0,1,\dotsc,t^{\star}\}$,
				\begin{multline}\label{eq-mem_time_prob_x0}
					\Pr[M_e(t)=m,X_e(t)=0]_{t^{\star}}\\=\left\{\begin{array}{l l} \delta_{m,-1}(1-p_e)^t,\quad t\leq t^{\star}+1,\\[0.5cm] \displaystyle \delta_{m,-1}\sum_{x=0}^{\floor{\frac{t-1}{t^{\star}+1}}}\binom{t-1-xt^{\star}}{x}p_e^x(1-p_e)^{t-(t^{\star}+1)x},\quad t>t^{\star}+1. \end{array}\right.
				\end{multline}
				For $t^{\star}=\infty$ and $m\in\{-1,0,1,\dotsc,t-1\}$,
				\begin{equation}
					\Pr[M_e(t)=m,X_e(t)=0]_{\infty}=\delta_{m,-1}(1-p_e)^t. \defqedspec
				\end{equation}
		\end{itemize}
	\end{theorem}
	
	
	From Theorem~\ref{thm-mem_status_pr}, we immediately obtain an expression for the probability that an elementary link is active at all times $t\geq 1$.
	
\newpage
	
	\begin{corollary}\label{cor-link_status_Pr1}
		For every time $t\geq 1$, cutoff $t^{\star}\in\mathbbm{N}_0\cup\{\infty\}$, and success probability $p_e\in[0,1]$, the probability that an elementary link in a quantum network undergoing the $t^{\star}$ memory-cutoff policy is active at time $t$ is
		\begin{multline}\label{eq-link_status_Pr1}
			\Pr[X_e(t)=1]_{t^{\star}}\\=\left\{\begin{array}{l l} 1-(1-p_e)^t, & t\leq t^{\star}+1,\\[0.5cm] \displaystyle \sum_{x=0}^{\floor{\frac{t-1}{t^{\star}+1}}}\sum_{k=1}^{t^{\star}+1}\binom{t-k-xt^{\star}}{x}\boldsymbol{1}_{t-k-x(t^{\star}+1)\geq 0}p_e^{x+1}(1-p_e)^{t-k-(t^{\star}+1)x}, & t>t^{\star}+1. \defqedspec \end{array}\right.
		\end{multline}
	\end{corollary}
	
	
	\begin{figure}
		\centering
		\includegraphics[scale=1]{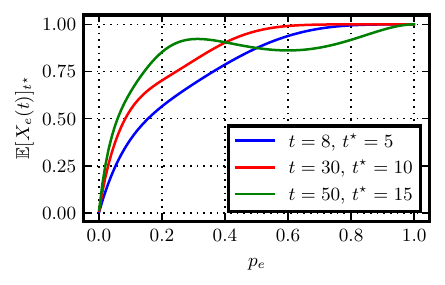}\quad
		\includegraphics[scale=1]{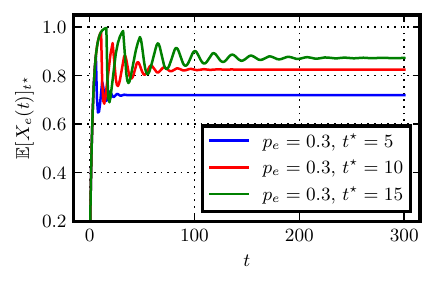}
		\caption{(Left) The expected elementary link status, given by \eqref{eq-link_status_Pr1}, as a function of the  success probability $p_e$ for various values of $t$ and $t^{\star}$. (Right) The expected elementary link status, given by \eqref{eq-link_status_Pr1}, as a function of $t$ for various values of $p_e$ and $t^{\star}$.}\label{fig-avg_link_value_example}
	\end{figure}
	
	See Figure~\ref{fig-avg_link_value_example} for plots of the expected elementary link status $\mathbb{E}[X_e(t)]_{t^{\star}}=\Pr[X_e(t)=1]_{t^{\star}}$ as a function of time $t$ and as a function of the  success probability $p_e$. The oscillatory behavior of the expected elementary link status as a function of time in the right panel of Figure~\ref{fig-avg_link_value_example} is due to the nature of the memory-cutoff policy, which requires that the elementary link be discarded every $t^{\star}$ time steps. Indeed, the period of the oscillations is $t^{\star}$, which is apparent for short times and large values of $t^{\star}$. In the long term, however, we see that the amplitude of the oscillations decreases, and the expected elementary link status reaches a steady state, whose value we state below in Theorem~\ref{thm-avg_mem_status_inf}.

\subsubsection{Long-term (steady-state) behavior}

	Let us now consider the $t\to\infty$, or long-term behavior of an elementary link undergoing the memory-cutoff policy.
	
	\begin{theorem}\label{thm-avg_mem_status_inf}
		Let $t^{\star}\in\mathbbm{N}_0$ be the cutoff and $p_e\in[0,1]$ be the  success probability for an elementary link in a quantum network. For all $m\in\{-1,0,1,\dotsc,t^{\star}\}$,
		\begin{align}
			\lim_{t\to\infty}\Pr[M_e(t)=m,X_e(t)=1]_{t^{\star}}&=\frac{p_e}{1+t^{\star}p_e}(1-\delta_{m,-1}),\label{eq-avg_mem_status_inf}\\
			\lim_{t\to\infty}\Pr[M_e(t)=m,X_e(t)=0]_{t^{\star}}&=\frac{1-p_e}{1+t^{\star}p_e}\delta_{m,-1}. \label{eq-mem_time_prob_infty_x0}
		\end{align}
		Consequently,
		\begin{equation}\label{eq-link_status_avg_inf}
			\lim_{t\to\infty}\mathbb{E}[X_e(t)]_{t^{\star}}=\frac{(t^{\star}+1)p_e}{1+t^{\star}p_e}. \defqedspec
		\end{equation}
	\end{theorem}

\subsection{Expected fidelity}\label{sec-mem_cutoff_fidelity}
	
	If an elementary link in a quantum network undergoes the $t^{\star}$ memory-cutoff policy, and it has  success probability $p_e$, then from Theorem~\ref{thm-mem_status_pr} and Corollary~\ref{cor-link_status_Pr1}, we immediately obtain the following expressions for the expected fidelity of the elementary link for all $t\geq 1$:
	\begin{align}
		\mathbb{E}[\widetilde{F}_{\!e}(t)]_{t^{\star}}&=\left\{\begin{array}{l l} \displaystyle \sum_{m=0}^{t-1}f_e(m)p_e(1-p_e)^{t-(m+1)} & t\leq t^{\star}+1,\\[0.5cm] \displaystyle \sum_{m=0}^{t^{\star}} f_e(m) \Pr[M_e(t)=m,X_e(t)=1]_{t^{\star}} & t>t^{\star}+1, \end{array}\right.\label{eq-avg_fid_tilde}\\[0.5cm]
		\mathbb{E}[F_{\!e}(t)]_{t^{\star}}&=\left\{\begin{array}{l l}\displaystyle \sum_{m=0}^{t-1}f_e(m)\frac{p_e(1-p_e)^{t-(m+1)}}{1-(1-p_e)^t} & t\leq t^{\star}+1,\\[0.5cm] \displaystyle \sum_{m=0}^{t^{\star}} f_e(m)\frac{\Pr[M_e(t)=m,X_e(t)=1]_{t^{\star}}}{\Pr[X_e(t)=1]_{t^{\star}}} & t>t^{\star}+1, \end{array}\right.\label{eq-avg_fid}
	\end{align}
	where in \eqref{eq-avg_fid_tilde} and \eqref{eq-avg_fid} the expression for $\Pr[M_e(t)=m,X_e(t)=1]_{t^{\star}}$ for $t>t^{\star}+1$ is given in \eqref{eq-mem_status_pr}, and the expression for $\Pr[X_e(t)=1]_{t^{\star}}$ for $t>t^{\star}+1$ is given in \eqref{eq-link_status_Pr1}.
	
	In the limit $t\to\infty$, using Theorem~\ref{thm-avg_mem_status_inf}, we obtain the following:
	\begin{align}
		\lim_{t\to\infty}\mathbb{E}[\widetilde{F}_{\!e}(t)]_{t^{\star}}&=\frac{p_e}{1+t^{\star}p_e}\sum_{m=0}^{t^{\star}}f_e(m),\quad t^{\star}\in\mathbbm{N}_0,\label{eq-avg_fid_tilde_tInfty}\\
		\lim_{t\to\infty}\mathbb{E}[F_{\!e}(t)]_{t^{\star}}&=\frac{1}{t^{\star}+1}\sum_{m=0}^{t^{\star}}f_e(m),\quad t^{\star}\in\mathbbm{N}_0.\label{eq-avg_fid_tInfty}
	\end{align}

\section{Summary and outlook}

	Understanding the capabilities and limitations of near-term quantum networks is an important problem, whose solutions will help drive the physical realization of small-scale quantum networks, and eventually lead to the realization of a global-scale quantum internet. Before such developments can be made, we must first have a common language and theoretical framework for analyzing quantum network protocols. This work provides the first steps towards such a general theoretical framework for practical quantum network protocols. We make use of the concept of a decision process to model protocols for elementary links. This formulation is natural based on the structure of near-term quantum network protocols, and it allows for optimal policies to be obtained using dynamic programming algorithms.
	
	Section~\ref{sec-elem_link_QDP} constitutes the main conceptual and technical contributions of this work. It lays out a quantum decision process for elementary links in a quantum network. The framework allows us to model protocols for an elementary link with respect to time in terms of the actions of an agent that can either request an entangled state from a source or keep the one it currently has in its quantum memory. The sequence of actions of the agent over time defines a policy, which is synonymous with the protocol. After formulating the quantum decision process for an elementary link and defining figures of merit for evaluating policies, we considered three examples of policies: the backward recursion policy in \eqref{eq-backward_recursion_policy}, the forward recursion policy in \eqref{eq-network_QDP_forward_recursion_policy}, and the memory-cutoff policy in Definition~\ref{def-mem_cutoff_policy}. We proved that the backward recursion policy is optimal among all policies (Theorem~\ref{thm-opt_policy}).
	
	In Section~\ref{sec-mem_cutoff_policy}, we considered the memory-cutoff policy, and we applied the results of Section~\ref{sec-elem_link_QDP} in order to obtain closed-form expressions for the expected quantum state of an elementary link for an arbitrary value of the cutoff, for both short times and long times. We also obtained closed-form expressions for the figures of merit defined in Section~\ref{sec-figures_of_merit_general}.
	
	We expect the results of this work to be useful as a building block for large-scale quantum network protocols. For example, the policies for elementary links considered in this work can be used as an underlying policy layer on top of which routing protocols can be applied in order to obtain an overall (in general non-optimal) policy for generating end-to-end entanglement in a network. Furthermore, because our results apply to elementary links consisting of an arbitrary number of nodes and to any noise model for the quantum memories, they can be applied to protocols that go beyond bipartite entanglement distribution, namely to protocols for distributing multipartite entanglement. We also expect our results to be useful in the analysis of entanglement distribution using all-photonic quantum repeaters \cite{KKL15}, and in the analysis of entanglement distribution using satellite-based quantum networks \cite{BBM+15,KBD+19,LKB20,GSH+20,PMD+20}, in which an elementary link can easily be on the order of 1000 km \cite{YCL17} while still having a high fidelity. Initial applications of the results of this work to satellite-based elementary links can be found in \cite[Chapter~7]{Kha21}.
	
	This work also opens up several other interesting directions for future work. Of immediate interest is to go beyond the elementary link level by incorporating entanglement distillation and swapping into the decision process developed here, which would allow for the analysis of more sophisticated quantum network protocols, and it would build on prior works \cite{AME11,MT13,PWD18,PD19,WMDB19} that examine quantum network protocols beyond the elementary link level. Such an extension would involve multiple cooperating agents, in contrast to the independent agents considered in this work, and can in principle be formulated for an arbitrary network topology. A simple, but relevant example of a network topology, which has also been considered recently, is the star-shaped network used for the so-called ``quantum entanglement switch'' \cite{VGNT20,PGGT20,VGNT21,VGPT21b}. As we might expect, these extra elements of entanglement distillation and swapping will make analytic analysis (as done in this work) intractable. This is when reinforcement learning algorithms are expected to be helpful for finding optimal policies. The beginnings of some of these future developments can be found in \cite[Appendix~D]{Kha21}.

\section*{Acknowledgments}

	I dedicate this work to the memory of Jonathan P. Dowling. This work, and the other works that I was fortunate to co-author with Jon, would not have been possible without his constant encouragement and his enthusiasm for the quantum internet.

	\noindent The plots in this work were made using the Python package matplotlib \cite{matplotlib}.
	
	\noindent Financial support was provided by the National Science Foundation and the National Science and Engineering Research Council of Canada Postgraduate Scholarship.

\newpage

\begin{appendices}
\addtocontents{toc}{\protect\setcounter{tocdepth}{1}}

\section{Related work}\label{app-related_work}

	Prior theoretical work on quantum networks can essentially be split into two types. The first type of work is information theoretic \cite{AML16,AK17,BA17,RKB+18,Pir19,Pir19b,DBWH19}, with the focus being on obtaining (or placing bounds on) the ultimate limits of communication in a quantum network, without taking device imperfections (such as quantum memories with limited coherence times and non-deterministic gate operations) explicitly into account. Consequently, this type of work does not always provide a realistic analysis for near-term quantum networks. The second type of theoretical work on quantum networks \cite{BDC98,DBC99,DLCZ01,GKL+03,RHG05,CJKK07,SRM+07,SSM+07,JTN+09,FWH+10,BPv11,ZDB12,MSD+12,MKL+14,KKL15,ZBD16,EKB16,NJKL16,JKR+16,MLK+16,MEL17,LZH+17,VK17,DHR17,RGR+18,RYG+18,SSv19,HBE20} (see also \cite{SSR+11,MATN15,VanMeter_book} and the references therein) focuses on calculating communication rates under more realistic assumptions on the devices, sometimes with specific physical architectures and different types of noise- and loss-mitigation techniques, such as entanglement distillation and quantum error-correction, taken into account. Typically, these works have been focused primarily on the topology of a linear chain of nodes. However, more recent work \cite{WZM+16,ZPD+18,PWD18,DKD18,WPZD19,PD19,KMSD19,DSM+19,MMG19,CERW20,GEW20,STCMW20,DPW20} has begun to focus on arbitrary topologies, with routing protocols taken into account in some cases \cite{SMI+17,PKT+19,HPE19,CRDW19,Pir19,Pir19b,LLLC20,LBD+20}. The techniques used in these works are often varied, and sometimes different terminology and mathematical tools are used. One of the aims of this work is to provide the starting point for a unified theoretical framework for practical quantum network protocols that can, in principle, incorporate and generalize the developments in the aforementioned works of the second type, and that can be applied to arbitrary topologies and physical architectures. 

	Policy-based approaches to quantum network protocols, as considered in this work, have been considered before in \cite{VLMN09,AME11,JKR+16,MDV19} (see also \cite{VanMeter_book}), where terms such as ``rule-set'' or ``schedule'' have been used instead of ``policy''. In \cite{JKR+16}, the authors consider different control protocols for elementary links in a quantum network based on different configurations of the sources and heralding stations and the impact they have on end-to-end entanglement distribution rates. In \cite{VLMN09}, the authors look at protocols for end-to-end entanglement distribution along a chain of quantum repeaters and simulate different scheduling protocols for entanglement distillation along elementary links. Similarly, in \cite{AME11}, the authors use finite state machines to analyze the different layers of an end-to-end entanglement distribution protocol in quantum networks, such as entanglement distillation and entanglement swapping. Finally, in \cite{MDV19}, the authors use an approach based on rule-sets to determine end-to-end entanglement distribution rates and fidelities of the end-to-end pairs along a chain of quantum repeaters. One of the goals of this work is to explicitly formalize the approaches taken in the aforementioned works within the context of decision processes, because this allows us to systematically study different policies and calculate quantities that are relevant for quantum networks, such as entanglement distribution rates and fidelities of the quantum states of the links.
	
	This work is complementary to prior work that uses Markov chains to analyze waiting times and entanglement distribution rates for a chain of quantum repeaters \cite{RFT+09,VK19,BCE19,SSv19}; we also refer to the work on entanglement switches in \cite{VGNT20,PGGT20,VGNT21,VGPT21b}, which use both discrete-time and continuous-time Markov chains. This work is also complementary to prior work that analyzes the quantum state in a quantum repeater chain with noisy quantum memories~\cite{HKBD07,RPL09,GKF+15,KGD+16,KVS+19}. 
	
	In \cite{WMDB19}, the authors use reinforcement learning to discover protocols for quantum teleportation, entanglement distillation, and end-to-end bipartite entanglement distribution along a chain of quantum repeaters. While the work in \cite{WMDB19} is largely numerical, this work is focused on formally developing the mathematical tools needed to perform reinforcement learning of entanglement distribution protocols in general quantum networks. The development of the mathematical tools is essential when an agent acts in a quantum-mechanical environment, because it is important to understand how the agent's actions affect the quantum state of the environment. Furthermore, we expect that the protocols learned in \cite{WMDB19}, particularly those for entanglement distillation and entanglement swapping, could be incorporated as subroutines within the mathematical framework of decision processes developed in this work, so that large-scale quantum network protocols (going beyond the elementary link level) can be discovered using reinforcement learning.
	
	This work is also related to the work in \cite{DSM+19}, in which the authors develop a link-layer protocol for generating elementary links in a quantum network, and they perform simulations of entanglement distribution using a discrete-event simulator under various scenarios. The effect of different scheduling strategies is also considered. The protocols in \cite{DSM+19} consider actions in a more fine-grained manner than what we consider in this work. In particular, the steps required for heralding (namely, the communication signals for the results of the heralding) are explicitly taken into account. These steps can be incorporated within the framework developed here---all that has to be done is to appropriately define the transition maps (defined below) in order to accommodate the additional actions. We can similarly incorporate other classical discrete-valued properties of an elementary link into the elementary link status random variable $X_e(t)$ if needed.
	
	The approach to policy optimization taken in this work is similar to the approach in \cite{JTKL07}, in the sense that both approaches make use of the principle of dynamic programming. While in \cite{JTKL07} the focus is on obtaining end-to-end bipartite entanglement in a chain of quantum repeaters, the goal here is simply to examine elementary links and to determine the optimal sequence of actions that should be performed in order to maximize both the fidelity of an elementary link and the probability that an elementary link is active at any given time. Other recent work on optimization of quantum network protocols can be found in \cite{GEW20,STCMW20}.

\section{Elementary link generation}\label{sec-elem_link_generation}

	Let $e\in E$ denote an arbitrary (hyper)edge of the graph $G=(V,E)$ corresponding to the elementary links of a quantum network, and suppose that the edge contains $k\geq 2$ nodes $v_1,v_2,\dotsc,v_k$. The source associated with the elementary link corresponding to $e$ prepares a $k$-partite quantum state $\rho_{e}^S$ for $k$ quantum systems labeled $A_{e}^{v_1},A_{e}^{v_2},\dotsc,A_{e}^{v_k}$. This quantum state is distributed to the nodes by sending each quantum system $A_{e}^{v_i}$ through a quantum channel $\mathcal{S}_{e,i}$, so that the state after transmission is
	\begin{equation}
		\rho_{e}^{S,\text{out}}\coloneqq\left(\mathcal{S}_{e,1}\otimes\mathcal{S}_{e,2}\otimes\dotsb\otimes\mathcal{S}_{e,k}\right)(\rho_{e}^S)=\mathcal{S}_{e}(\rho_{e}^S),
	\end{equation}
	where $\mathcal{S}_{e}\coloneqq\mathcal{S}_{e,1}\otimes\mathcal{S}_{e,2}\otimes\dotsb\otimes\mathcal{S}_{e,k}$.
	
	After transmission from the source to the nodes, the nodes execute a \textit{heralding procedure}, which is an LOCC protocol executed by the nodes that confirms whether all of the nodes received their quantum systems. If the heralding procedure succeeds, then the nodes store their quantum systems in a quantum memory. Mathematically, the heralding procedure can be described by a quantum instrument $\{\mathcal{M}_{e}^0,\mathcal{M}_{e}^1\}$, which means that $\mathcal{M}_{e}^0$ and $\mathcal{M}_{e}^1$ are completely positive trace non-increasing maps such that $\mathcal{M}_{e}^0+\mathcal{M}_{e}^1$ is trace preserving. The map $\mathcal{M}_{e}^0$ corresponds to failure of the heralding procedure, and the map $\mathcal{M}_{e}^1$ corresponds to success. The outcome of the heralding procedure can then be captured by the following classical-quantum state:
	\begin{align}
		\widehat{\sigma}_{\!e}(1)&\coloneqq\ket{0}\bra{0}\otimes\mathcal{M}_{e}^0(\rho_{e}^{S,\text{out}})+\ket{1}\bra{1}\otimes\mathcal{M}_{e}^1(\rho_{e}^{S,\text{out}})\label{eq-link_cq_state_initial}\\
		&=\ket{0}\bra{0}\otimes \widetilde{\sigma}_{\!e}(0)+\ket{1}\bra{1}\otimes\widetilde{\sigma}_{\!e}(1),\label{eq-elem_link_initial_cq_state}
	\end{align}
	where the classical register holds the binary outcome of the heralding procedure (`1' for success and `0' for failure) and the quantum register holds the quantum state of the nodes corresponding to the outcome. In particular,
	\begin{equation}\label{eq-initial_state_tilde_0}
		\widetilde{\sigma}_{\!e}(0)\coloneqq(\mathcal{M}_{e}^0\circ\mathcal{S}_{e})(\rho_{e}^S)
	\end{equation}
	is the (unnormalized) quantum state corresponding to failure, and
	\begin{equation}\label{eq-initial_state_tilde_1}
		\widetilde{\sigma}_{\!e}(1)\coloneqq (\mathcal{M}_{e}^1\circ\mathcal{S}_{e})(\rho_{e}^{S})
	\end{equation}
	is the (unnormalized) quantum state corresponding to success. The (normalized) quantum states conditioned on success and failure, respectively, are defined to be
	\begin{equation}\label{eq-initial_link_states}
		\rho_{e}^0\coloneqq\frac{\widetilde{\sigma}_{\!e}(1)}{\Tr[\widetilde{\sigma}_{\!e}(1)]}, \quad \tau_{e}^{\varnothing}\coloneqq\frac{\widetilde{\sigma}_{\!e}(0)}{\Tr[\widetilde{\sigma}_{\!e}(0)]}.
	\end{equation}
	The superscript `0' in $\rho_{e}^0$ indicates that the quantum memories of the nodes are in their initial state immediately after success of the heralding procedure; we expand on this below. We let
	\begin{equation}\label{eq-elem_link_success_prob}
		p_{e}\coloneqq \Tr[\widetilde{\sigma}_{\!e}(1)]=\Tr[(\mathcal{M}_e^1\circ\mathcal{S}_e)(\rho_e^S)]
	\end{equation}
	denote the overall probability of success of the transmission from the source and of the heralding procedure.
	
	Now, as mentioned above, once the heralding procedure succeeds, the nodes store their quantum systems in their local quantum memory. We describe the decoherence of the quantum memories by quantum channels $\mathcal{N}_{e,i}$ acting on each quantum system $A_{e}^{v_i}$ of the elementary link corresponding to $e$, $i\in\{1,2,\dotsc,k\}$. The decoherence channel is applied at every time step in which the quantum system is in memory. The overall quantum channel acting on all of the quantum systems in the elementary link is
	\begin{equation}\label{eq-elem_link_decoherence_channel}
		\mathcal{N}_{e}\coloneqq\mathcal{N}_{e,1}\otimes\mathcal{N}_{e,2}\otimes\dotsb\otimes\mathcal{N}_{e,k}.
	\end{equation}
	The quantum state of the elementary link after $m$ time steps in the memories is therefore given by
	\begin{equation}
		\rho_{e}(m)\coloneqq\mathcal{N}_{e}^{\circ m}(\rho_{e}^0),
	\end{equation}
	where $\mathcal{N}_{e}^{\circ m}=\mathcal{N}_{e}\circ\mathcal{N}_{e}\circ\dotsb\circ\mathcal{N}_{e}$ ($m$ times). For a particular target/desired quantum state of the elementary link, which we assume to be a pure state $\psi=\ket{\psi}\bra{\psi}$, we let
	\begin{equation}
		f_{e}(m)\coloneqq \bra{\psi}\rho_{e}(m)\ket{\psi}=\bra{\psi}\mathcal{N}_{e}^{\circ m}(\rho_{e}^0)\ket{\psi}
	\end{equation}
	denote the fidelity of the state $\rho_e(m)$ with respect to the target state $\psi$.

\section{Details of the elementary link quantum decision process}\label{app-elem_link_QDP}

	The primary mathematical concept being used in this work is that of a Markov decision process. In particular, we consider a particular quantum generalization of a Markov decision process given in \cite{BBA14} (see also \cite{Cid16,YY18}), called a \textit{quantum partially observable Markov decision process}. For brevity, we use the term \textit{quantum decision process} throughout this work. In a quantum decision process, the agent's action at each time step results in a transformation of the quantum state of the environment, and the agent receives both partial (classical) information about the new quantum state of the environment along with a reward. In the context of elementary links in a quantum network, the elements of the quantum decision process that we formally define below are shown in Figure~\ref{fig-net_agent_env_0}. For a more general definition of a quantum decision process, we refer to \cite[Chapter~3]{Kha21}.
	
	\begin{definition}[Quantum decision process for elementary links]\label{def-network_QDP_elem_link}
		Let $G=(V,E)$ be the graph corresponding to the elementary links of a quantum network, and let $e\in E$. As shown in Figure~\ref{fig-net_agent_env_0}, we define a quantum decision process for $e$ by defining the agent for $e$ to be collectively the nodes belonging to $e$, and we define its environment to be the quantum systems distributed by the source station to the nodes of $e$. Then, the other elements of the quantum decision process are defined as follows.
		\begin{itemize}
			\item We denote the quantum systems of the environment collectively by $E^{e}$, and we let $E_t^{e}$ denote these quantum systems at time $t\geq 0$. We drop the superscript and simply write $E_t$ when the edge $e$ is understood from the context or unimportant in the context being considered. The quantum state of the environment at time $t=0$ is the source state $\rho_e^S\equiv\rho_{E_0^{e}}^{S}$.
			
			\item We let $\mathcal{X}=\{0,1\}$ tell us whether or not the elementary link is active at a particular time. From this, we define elementary link status random variables $X_{e}(t)$ for all $t\geq 1$ as follows:
				\begin{itemize}
					\item $X_{e}(t)=0$: elementary link is inactive (transmission and heralding not successful);
					\item $X_{e}(t)=1$: elementary link is active (transmission and heralding successful).
				\end{itemize}
				We let $\mathcal{A}=\{0,1\}$ be the set of possible actions of the agent, and we define corresponding action random variables $A_{e}(t)$ for all $t\geq 1$ as follows:
				\begin{itemize}
					\item $A_{e}(t)=0$: wait/keep the entangled state;
					\item $A_{e}(t)=1$: discard the entangled state and request a new entangled state.
				\end{itemize}
				The set of all histories up to time $t\geq 1$ is $(\mathcal{X}\times\mathcal{A})^{\times (t-1)}\times\mathcal{X}=\{0,1\}^{2t-1}$, and every element $h^t\in\{0,1\}^{2t-1}$ is a sequence of the form
				\begin{equation}
					h^t=(x_1,a_1,x_2,a_2,\dotsc,a_{t-1},x_t),
				\end{equation}
				with $x_j\in\{0,1\}$ being the elementary link status at time $j\in\{1,2,\dotsc, t\}$ and $a_j\in\{0,1\}$ being the action taken at time $j\in\{1,2,\dotsc,t-1\}$. The corresponding random variable for the history is
				\begin{equation}
					H_{e}(t)\coloneqq (X_{e}(1),A_{e}(1),X_{e}(2),A_{e}(2),\dotsc,A_{e}(t-1),X_{e}(t)).
				\end{equation}
				
				The random variables $X_{e}(t)$, $A_{e}(t)$, and $H_{e}(t)$ are mutually independent by definition for all~$e\in E$.
				
			\item The transformation of the quantum state of the environment from one time step to the next is given by a set of \textit{transition maps}. The transition maps tell us what effect each action has on the quantum state of the environment based on the status of the elementary link in the previous time step. We denote the transition maps by $\mathcal{T}_{e}^{x_t,a_t,x_{t+1}}\equiv \mathcal{T}_{E_t^{e}\to E_{t+1}^{e}}^{x_t,a_t,x_{t+1}}$ for all $x_t,a_t,x_{t+1}\in\{0,1\}$ and all $t\geq 1$, where
				\begin{align}
					\mathcal{T}_{e}^{x_t,1,1}(\sigma)&\coloneqq \Tr[\sigma](\mathcal{M}_{e}^1\circ\mathcal{S}_{e})(\rho_{e}^S)\quad\forall~x_t\in\{0,1\},\label{eq-network_QDP_trans_1}\\
					\mathcal{T}_{e}^{x_t,1,0}(\sigma)&\coloneqq \Tr[\sigma](\mathcal{M}_{e}^0\circ\mathcal{S}_{e})(\rho_{e}^S)\quad\forall~ x_t\in\{0,1\},\label{eq-network_QDP_trans_2}\\
					\mathcal{T}_{e}^{1,0,1}(\sigma)&\coloneqq \mathcal{N}_{e}(\sigma),\label{eq-network_QDP_trans_3}\\
					\mathcal{T}_{e}^{0,0,0}(\sigma)&\coloneqq \sigma,\label{eq-network_QDP_trans_4}
				\end{align}
				for all linear operators $\sigma$, where the definitions of the source transmission channel $\mathcal{S}_{e}$, the heralding quantum instrument $\{\mathcal{M}_{e}^0,\mathcal{M}_{e}^1\}$, and the decoherence channel $\mathcal{N}_{e}$ are given in Appendix~\ref{sec-elem_link_generation}. Superscript combinations not defined above are equal to the zero map by definition, i.e., $\mathcal{T}_{e}^{0,0,1}\coloneqq 0$ and $\mathcal{T}_{e}^{1,0,0}\coloneqq 0$. The transition maps are such that the sum $\sum_{x_{t+1}=0}^1 \mathcal{T}_e^{x_t,a_t,x_{t+1}}$ is a trace-preserving map for all $x_t,a_t\in\{0,1\}$. 
				
				The maps $\mathcal{T}_{e}^{0;x_1}\equiv\mathcal{T}_{E_0^{e}\to E_1^{e}}^{0;x_1}$, $x_1\in\{0,1\}$, are defined to be
				\begin{align}
					\mathcal{T}_{e}^{0;0}&\coloneqq\mathcal{M}_{e}^0\circ\mathcal{S}_{e},\label{eq-network_QDP_trans_5}\\
					\mathcal{T}_{e}^{0;1}&\coloneqq\mathcal{M}_{e}^1\circ\mathcal{S}_{e}.\label{eq-network_QDP_trans_6}
				\end{align}
				The sum $\mathcal{T}_e^{0;0}+\mathcal{T}_e^{0;1}$ is a trace-preserving map.
				
			\item Given a pure target state $\psi^{\text{target}}=\ket{\psi^{\text{target}}}\bra{\psi^{\text{target}}}$, the reward at time $t\geq 1$ is defined as follows:
				\begin{align}
					\mathcal{R}_{e}^{t;h^{t+1},1}(\cdot)&=\psi^{\text{target}}(\cdot)\psi^{\text{target}},\\
					\mathcal{R}_{e}^{t;h^{t+1},0}(\cdot)&=(\mathbbm{1}-\psi^{\text{target}})(\cdot)(\mathbbm{1}-\psi^{\text{target}}),
				\end{align}
				for all $h^{t+1}\in\{0,1\}^{2t+1}$, and we define functions $R_{e}(t):\{0,1\}^{2t+1}\times\{0,1\}\to\mathbb{R}$ as follows:
				\begin{align}
					R_{e}(t)(h^{t+1},0)&=0,\\
					R_{e}(t)(h^{t+1},1)&=\delta_{x_{t+1},1},
				\end{align}
				for every history $h^{t+1}=(x_1,a_1,\dotsc,x_t,a_t,x_{t+1})\in\{0,1\}^{2t+1}$.
				
			\item A $T$-step policy for the agent is a sequence of the form $\pi=(d_1,d_2,\dotsc,d_T)$, where the decision functions $d_t:\{0,1\}^{2t-1}\times\{0,1\}\to[0,1]$ are defined to be 
				\begin{equation}
					d_t(h^t)(a_t)\coloneqq\Pr[A_{e}(t)=a_t|H_{e}(t)=h^t]
				\end{equation}
				for all times $1\leq t\leq T$, histories $h^t\in\{0,1\}^{2t-1}$, and actions $a_t\in\{0,1\}$.~\defqed
		\end{itemize}
	\end{definition}
	
	Given an elementary link specified by an edge $e\in E$ and a policy $\pi=(d_1,d_2,\dotsc,d_t,\dotsc)$ for the corresponding agent, the agent-environment interaction, as depicted in Figure~\ref{fig-net_agent_env_0}, consists of a sequence of actions and responses of the agent and environment, respectively. This back-and-forth between the agent and the environment falls into the general paradigm of agent-environment interactions considered previously in \cite{DTB15,DTB16}, and more generally it falls within the theoretical framework of quantum causal networks (also referred to as quantum combs and quantum games) \cite{GW07,CDP08,CDP09,VW16}; see Figure~\ref{fig-QDP}.
	
	\begin{figure}
		\centering
		\includegraphics[scale=1.25]{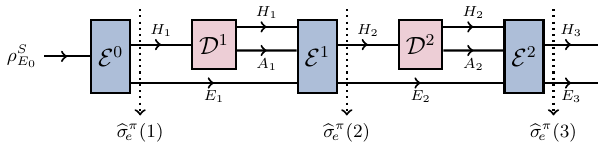}
		\caption{The agent-environment interaction corresponding to the quantum decision process for an elementary link in a quantum network, as given in Definition~\ref{def-network_QDP_elem_link}, can be viewed as a quantum causal network. The actions of the agent are given by \textit{decision channels}, defined in \eqref{eq-QDP_decision_channel} according to the agent's policy, and the corresponding changes in the quantum state of the environment are given by \textit{environment response channels}, defined in \eqref{eq-env_resp_chan_1} and \eqref{eq-env_resp_chan_2} according to the transition maps in \eqref{eq-network_QDP_trans_1}--\eqref{eq-network_QDP_trans_6}.}\label{fig-QDP}
	\end{figure}

	In terms of the quantum causal network shown in Figure~\ref{fig-QDP}, the actions of the agent are given by the \textit{decision channels}, which are defined as
	\begin{equation}\label{eq-QDP_decision_channel}
		\mathcal{D}_{H_t\to H_tA_t}^{t}(\ket{h^t}\bra{h^t}_{H_t})\coloneqq\ket{h^t}\bra{h^t}_{H_t}\otimes\sum_{a=0}^1d_t(h^t)(a)\ket{a}\bra{a}_{A_t},
	\end{equation}
	where for $h^t=(x_1,a_1,\dotsc,a_{t-1},x_t)$,
	\begin{equation}
		\ket{h^t}_{H_t}=\ket{x_1}_{X_1}\ket{a_1}_{A_1}\dotsb\ket{a_{t-1}}_{A_{t-1}}\ket{x_t}_{X_t}
	\end{equation}
	and $H_t\equiv X_1A_1\dotsb A_{t-1}X_t$, with $X_j$ and $A_j$ denoting classical registers that store the elementary link status and the action, respectively, at time $j$.
	
	The changes in the quantum state of the environment, based on the actions of the agent, are given by the \textit{environment response channels}, which are defined as
	\begin{align}
		&\mathcal{E}_{E_0\to H_1E_1}^0(\rho_{E_0})\coloneqq \sum_{x_1=0}^1\ket{x_1}\bra{x_1}_{X_1}\otimes\mathcal{T}_{E_0\to E_1}^{0;x_1}(\rho_{E_0}), \label{eq-env_resp_chan_1}\\
		&\mathcal{E}_{H_tA_tE_t\to H_{t+1}E_{t+1}}^t(\omega_{H_t}\otimes\sigma_{A_t}\otimes\rho_{E_t})\nonumber\\
		&\qquad\coloneqq \sum_{x_t,a_t,x_{t+1}=0}^1 \Tr_{X_tA_t}\!\left[(\omega_{H_t}\otimes\sigma_{A_t})\ket{x_t,a_t}\bra{x_t,a_t}_{X_tA_t}\right]\nonumber\\
		&\qquad\qquad\qquad\qquad\qquad\times\ket{x_t,a_t,x_{t+1}}\bra{x_t,a_t,x_{t+1}}_{X_tA_tX_{t+1}}\otimes\mathcal{T}_{E_t\to E_{t+1}}^{x_t,a_t,x_{t+1}}(\rho_{E_t}), \label{eq-env_resp_chan_2}
	\end{align}
	for arbitrary states $\rho_{E_0},\omega_{H_t},\sigma_{A_t},\rho_{E_t}$.
	
	In general, the classical-quantum state in \eqref{eq-elem_link_cq_state_QDP} is given by
	\begin{equation}
		\widehat{\sigma}_{\!e}^{\pi}(t)\equiv\widehat{\sigma}_{H_tE_t}^{\pi}(t)\coloneqq\mathcal{P}_{E_0\to H_tE_t}^{\pi;t}(\rho_{E_0}^S),
	\end{equation}
	where
	\begin{multline}
		\mathcal{P}_{E_0\to H_tE_t}^{\pi;t}\coloneqq\mathcal{E}_{H_{t-1}A_{t-1}E_{t-1}}^{t-1}\circ\mathcal{D}_{H_{t-1}\to H_{t-1}A_{t-1}}^{t-1}\circ\dotsb\\\circ\mathcal{E}_{H_1A_1E_1\to H_2E_2}^1\circ\mathcal{D}_{H_1\to H_1 A_1}^{1}\circ\mathcal{E}_{E_0\to H_1E_1}^0.
	\end{multline}
	Using the definitions of the decision and environment response channels, it is straightforward to show that
	\begin{equation}
		\widehat{\sigma}_{\!e}^{\pi}(t)=\sum_{h^t\in\{0,1\}^{2t-1}}\ket{h^t}\bra{h^t}_{H_t}\otimes\widetilde{\sigma}_{\!e}^{\pi}(t;h^t),
	\end{equation}
	where
	\begin{equation}\label{eq-cond_state_unnormalized}
		\widetilde{\sigma}_{\!e}^{\pi}(t;h^t)=\left(\prod_{j=1}^{t-1}d_j(h_j^t)(a_j)\right)\left(\mathcal{T}_e^{x_{t-1},a_{t-1},x_t}\circ\dotsb\circ\mathcal{T}_e^{x_2,a_2,x_3}\circ\mathcal{T}_e^{x_1,a_1,x_2}\circ\mathcal{T}_e^{0;x_1}\right)(\rho_e^S).
	\end{equation}
	Then, we have
	\begin{align}
		\Pr[H_e(t)=h^t]_{\pi}&=\Tr[\widetilde{\sigma}_{\!e}^{\pi}(t;h^t)],\\
		\sigma_{\!e}(t|h^t)&=\frac{\widetilde{\sigma}_{\!e}^{\pi}(t;h^t)}{\Pr[H_e(t)=h^t]_{\pi}}.
	\end{align}
	On the other hand, by the basic rules of probability, we have
	\begin{align}
		&\Pr[H_e(t)=h^t]_{\pi}=\Pr[X_e(1)=x_1,A_e(1)=a_1,\dotsc,A_e(t-1)=a_{t-1},X_e(t)=x_t]_{\pi}\\
		&\quad=\Pr[X_e(1)=x_1]\nonumber\\
		&\qquad\times \prod_{j=2}^t \left(\Pr[X_e(j)=x_j|H(j-1)=h_{j-1}^t,A(j-1)=a_{j-1}]\cdot d_{j-1}(h_{j-1}^t)(a_{j-1})\right).\label{eq-prob_history}
	\end{align}
	It follows that the transition probabilities in \eqref{eq-prob_history} are given by
	\begin{equation}
		\Pr[X_e(t+1)=x_{t+1}|H_e(t)=h^t,A_e(t)=a_t]=\Tr\!\left[\mathcal{T}_e^{x_t,a_t,x_{t+1}}(\sigma_{\!e}(t|h^t))\right].
	\end{equation}
	Using this, and the definition of the transition maps, we have the following values for the transition probabilities for all $t\geq 1$ and for every history $h^t=(x_1,a_1,\dotsc,a_{t-1},x_t)\in\{0,1\}^{2t-1}$:
	\begin{align}
		\Pr[X_{e}(t+1)=0|X_{e}(t)=x_t,A_{e}(t)=1]&=1-p_{e},\label{eq-link_trans_prob_1}\\
		\Pr[X_{e}(t+1)=1|X_{e}(t)=x_t,A_{e}(t)=1]&=p_{e},\label{eq-link_trans_prob_2}\\
		\Pr[X_{e}(t+1)=x_{t+1}|X_{e}(t)=x_t,A_{e}(t)=0]&=\delta_{x_t,x_{t+1}}\quad\forall~x_{t+1}\in\{0,1\},\label{eq-link_trans_prob_4}
	\end{align}
	where $p_e$ is the  success probability defined in \eqref{eq-elem_link_success_prob}. Observe that the transition probabilities are time independent. Furthermore, the status of an elementary link at time $t+1$ depends only on the status and action at time $t$, not on the entire history of statuses and actions. This reflects the fact that the transition maps also depend only on the status and action at the previous time step.
	
	For every edge $e\in E$ in the graph $G=(V,E)$ corresponding to the elementary links of a quantum network, and for every $T$-step policy $\pi$ for the elementary link corresponding to $e$, the expected reward at time $T$ is
	\begin{align}
		\mathbb{E}[R_e(T)]_{\pi}&\coloneqq\sum_{h^{T+1}\in\{0,1\}^{2T+1}}\sum_{s\in\{0,1\}}R_e(T)(h^{T+1},s)\Pr\left[H(T+1)=h^{T+1},s\right]_{\pi}\\
		&=\sum_{h^{T+1}\in\{0,1\}^{2T+1}}\sum_{s\in\{0,1\}}R_e(T)(h^{T+1},s) \Tr\left[\mathcal{R}_e^{T;h^{T+1},s}\left(\widetilde{\sigma}_{\!e}^{\pi}(T+1;h^{T+1})\right)\right]\\
		&=\sum_{\substack{h^{T+1}\in\{0,1\}^{2T+1}\\x_{T+1}=1}}\Tr\left[\psi^{\text{target}}\left(\widetilde{\sigma}_{\!e}^{\pi}(T+1;h^{T+1})\right)\right]\label{eq-QDP_exp_reward_formula_1}\\
		&=\Tr\left[(\ket{1}\bra{1}_{X_{T+1}}\otimes\psi^{\text{target}})\widehat{\sigma}_{\!e}^{\pi}(T+1)\right].\label{eq-QDP_exp_reward_formula_2}
	\end{align}
	In other words,
	\begin{equation}
		\mathbb{E}[R_e(T)]_{\pi}=\mathbb{E}[\widetilde{F}_{\!e}(T+1)]_{\pi}.
	\end{equation}

	Finally, the memory time random variable $M_e(t)$ defined in \eqref{eq-mem_time_def1} can be expressed in a closed form as follows:
	\begin{align}
		M_e(t)&=A_e(0)(X_e(1)+X_e(2)+\dotsb+X_e(t)-1)\overline{A_e(1)}\,\overline{A_e(2)}\dotsb\overline{A_e(t-1)}\nonumber\\
		&\quad +A_e(1)(X_e(2)+X_e(3)+\dotsb+X_e(t)-1)\overline{A_e(2)}\,\overline{A_e(3)}\dotsb\overline{A_e(t-1)}\nonumber\\
		&\quad +A_e(2)(X_e(3)+X_e(4)+\dotsb+X_e(t)-1)\overline{A_e(3)}\,\overline{A_e(4)}\dotsb\overline{A_e(t-1)}\nonumber\\
		&\quad +\dotsb\nonumber\\
		&\quad +A_e(t-1)(X_e(t)-1)\label{eq-mem_time_def2a}\\
		&=\sum_{j=1}^t A_e(j-1)\left(\sum_{\ell=j}^t X_e(\ell)-1\right)\prod_{k=j}^{t-1}\overline{A_e(k)}, \label{eq-mem_time_def2}
	\end{align}
	where $A_e(0)\equiv 1$ and $\overline{A_e(k)}\coloneqq 1-A(k)$ for all $k\geq 1$. This definition is indeed equivalent to the one in \eqref{eq-mem_time_def1}, because we can write the expression in \eqref{eq-mem_time_def2a} as
	\begin{equation}
		M_e(t)=\left(M_e(t-1)+X_e(t)\right)\overline{A_e(t-1)} + (X_e(t)-1)A_e(t-1),
	\end{equation}
	which is equivalent to \eqref{eq-mem_time_def1}.

\subsection{Proof of Theorem~\ref{thm-link_quantum_state}}\label{app-link_quantum_state_pf}

	First, let us observe that the statement of the proposition is true for $t=1$, because by \eqref{eq-network_QDP_trans_5}, \eqref{eq-network_QDP_trans_6}, and \eqref{eq-cond_state_unnormalized}, we can write
	\begin{equation}
		\widetilde{\sigma}_{\!e}^{\pi}(1;x_1)=x_1\widetilde{\rho}_{e}^0+(1-x_1)\widetilde{\tau}_{e}^{\varnothing},
	\end{equation}
	where $\widetilde{\rho}_{e}^0\coloneqq(\mathcal{M}_{e}^1\circ\mathcal{S}_{e})(\rho_{e}^S)$ and $\widetilde{\tau}_{e}^{\varnothing}\coloneqq(\mathcal{M}_{e}^0\circ\mathcal{S}_{e})(\rho_{e}^S)$. Then, indeed, we have $M_{e}(1)=0$ according to the definition in \eqref{eq-mem_time_def1}, as required, if $x_1=1$. Furthermore,
	\begin{equation}
		\Tr[\widetilde{\sigma}_{\!e}^{\pi}(1;x_1)]=x_1p_{e}+(1-x_1)(1-p_{e})=p_{e}^{x_1}(1-p_{e})^{1-x_1},
	\end{equation}
	so that
	\begin{align}
		\sigma_{\!e}(1|x_1)&=\frac{x_1\widetilde{\rho}_{e}^0+(1-x_1)\widetilde{\tau}_{e}^{\varnothing}}{p_{e}^{x_1}(1-p_{e})^{1-x_1}}\\
		&=\left\{\begin{array}{l l} \rho_{e}^0 & \text{if }x_1=1,\\ \tau_{e}^{\varnothing} & \text{if }x_1=0 \end{array}\right.\\
		&=x_1\rho_{e}^0+(1-x_1)\tau_{e}^{\varnothing},
	\end{align}
	where we recall the definitions of $\rho_{e}^0$ and $\tau_{e}^{\varnothing}$ from \eqref{eq-initial_link_states}.
	
	Now, for $t\geq 2$, we use \eqref{eq-cond_state_unnormalized}. Based on the definition of the transition maps, for every time step $j>1$ in which the action ``wait'' (i.e., $A_{e}(j)=0$) is performed and the elementary link is active (i.e., $X_{e}(j)=1$), the elementary link stays active at time step $j+1$, and thus by definition the memory time must be incremented by one, which is consistent with the definition of the memory time $M_{e}(t)$ given in \eqref{eq-mem_time_def1}, and the quantum state of the elementary link goes from $\rho_{e}(M_{e}(t))$ to $\rho_{e}(M_{e}(t)+1)$. If instead the elementary link is active at time $j$ and the action ``request'' is performed (i.e., $A_{e}(j)=1$), then the quantum state of the elementary link is discarded and is replaced either by the state $\rho_{e}^0$ (if $X_{e}(j+1)=1$) with probability $p_{e}$ or by the state $\tau_{e}^{\varnothing}$ (if $X_{e}(j+1)=0$) with probability $1-p_{e}$. In the former case, the memory time must be reset to zero, consistent with \eqref{eq-mem_time_def1}, and in the latter case, the memory time is $-1$, also consistent with \eqref{eq-mem_time_def1}.
	
	Furthermore, by definition of the transition maps, every time the action ``request'' is performed, we obtain a factor of $p_{e}$ (if the request succeeds) or $1-p_{e}$ (if the request fails). If the action ``wait'' is performed, then we obtain no additional multiplicative factors. The quantity $N_{e}^{\text{succ}}(t-1)$ is, by definition, equal to the number of requests that succeeded in $t-1$ time steps. Therefore, overall, we obtain a factor $p_{e}^{N_{e}^{\text{succ}}(t-1)}$ at the $(t-1)^{\text{st}}$ time step for the number of successful requests. The number of failed requests in $t-1$ time steps is given by
	\begin{align}
		\sum_{j=1}^{t-1} A_{e}(j-1)(1-X_{e}(j))&=\sum_{j=1}^{t-1} A_{e}(j-1)-\sum_{j=1}^{t-1} A_{e}(j-1)X_{e}(j)\\
		&=N_{e}^{\text{req}}(t-1)-N_{e}^{\text{succ}}(t-1),
	\end{align}
	so that we obtain an overall factor of $(1-p_{e})^{N_{e}^{\text{req}}(t-1)-N_{e}^{\text{succ}}(t-1)}$ at the $(t-1)^{\text{st}}$ time step for the failed requests. Also, the memory time at the $(t-1)^{\text{st}}$ time step is $M_{e}(t-1)(h_{t-1}^t)$, and then because the quantum state is either $\rho_{e}(M_{e}(t-1)(h_{t-1}^t))$ or $\tau_{e}^{\varnothing}$, we obtain
	\begin{align}
		\widetilde{\sigma}_{\!e}^{\pi}(t;h^t)&=\left(\prod_{j=1}^{t-1}d_j(h_j^t)(a_j)\right)p_{e}^{N_{e}^{\text{succ}}(t-1)(h_{t-1}^t)}(1-p_{e})^{N_{e}^{\text{req}}(t-1)(h_{t-1}^t)-N_{e}^{\text{succ}}(t-1)(h_{t-1}^t)}\nonumber\\
		&\qquad\qquad\times\left(x_{t-1}\mathcal{T}_{e}^{1,a_{t-1},x_t}(\rho_e(M_{e}(t-1)(h_{t-1}^t)))+(1-x_{t-1})\mathcal{T}_{e}^{0,a_{t-1},x_t}(\tau_{e}^{\varnothing})\right)\\
		&=\left(\prod_{j=1}^{t-1}d_j(h_j^t)(a_j)\right)p_{e}^{N_{e}^{\text{succ}}(t-1)(h_{t-1}^t)}(1-p_{e})^{N_{e}^{\text{req}}(t-1)(h_{t-1}^t)-N_{e}^{\text{succ}}(t-1)(h_{t-1}^t)}\nonumber\\
		&\qquad\qquad\times p_{e}^{a_{t-1}x_t}(1-p_{e})^{a_{t-1}(1-x_t)}(x_t\rho_{e}(M_{e}(t)(h^t))+(1-x_t)\tau_{e}^{\varnothing}) \\
		&=\left(\left(\prod_{j=1}^{t-1}d_j(h_j^t)(a_j)\right)p_{e}^{N_{e}^{\text{succ}}(t)(h^t)}(1-p_{e})^{N_{e}^{\text{req}}(t)(h^t)-N_{e}^{\text{succ}}(t)(h^t)}\right)\nonumber\\
		&\qquad\qquad\times (x_t\rho_{e}(M_{e}(t)(h^t))+(1-x_t)\tau_{e}^{\varnothing}).
	\end{align}
	Then, because $\Pr[H_{e}(t)=h^t]_{\pi}=\Tr[\widetilde{\sigma}_{\!e}^{\pi}(t;h^t)]$, we have
	\begin{equation}
		\Pr[H_{e}(t)=h^t]_{\pi}=\left(\prod_{j=1}^{t-1}d_j(h_j^t)(a_j)\right)p_{e}^{N_{e}^{\text{succ}}(t)(h^t)}(1-p_{e})^{N_{e}^{\text{req}}(t)(h^t)-N_{e}^{\text{succ}}(t)(h^t)},
	\end{equation}
	as required. Finally,
	\begin{equation}
		\sigma_{\!e}(t|h^t)=\frac{\widetilde{\sigma}_{\!e}^{\pi}(t;h^t)}{\Tr[\widetilde{\sigma}_{\!e}^{\pi}(t;h^t)]}=x_t\,\rho_{e}\!\left(M_{e}(t)(h^t)\right)+(1-x_t)\tau_{e}^{\varnothing},
	\end{equation}
	which completes the proof.

\subsection{Proof of Corollary~\ref{cor-link_avg_q_state}}\label{app-cor-link_avg_q_state_pf}

	Using the result of Theorem~\ref{thm-link_quantum_state}, the expected quantum state of the elementary link at time $t\geq 1$ is given by
	\begin{align}
		\sigma_{\!e}^{\pi}(t)&=\Tr_{H_t}[\widehat{\sigma}_{\!e}^{\pi}(t)]\\
		&=\sum_{h^t\in\{0,1\}^{2t-1}}\widetilde{\sigma}_{\!e}^{\pi}(t;h^t)\\
		&=\sum_{h^t\in\{0,1\}^{2t-1}}\Pr[H_{e}(t)=h^t]_{\pi}\left(X_{e}(t)(h^t)\rho_{e}\!\left(M_{e}(t)(h^t)\right)+(1-X_{e}(t)(h^t))\tau_{e}^{\varnothing}\right)\\
		&=\sum_{\substack{h^t\in\{0,1\}^{2t-1}:\\X_e(t)(h^t)=0}}\Pr[H_{e}(t)=h^t]_{\pi}\,\tau_{e}^{\varnothing}+\sum_{\substack{h^t\in\{0,1\}^{2t-1}:\\X_e(t)(h^t)=1}}\Pr[H_{e}(t)=h^t]_{\pi}\,\rho_{e}\!\left(M_{e}(t)(h^t)\right)\\
		&=(1-\Pr[X_{e}(t)=1]_{\pi})\tau_{e}^{\varnothing}+\sum_{m}\Pr[X_{e}(t)=1,M_{e}(t)=m]_{\pi}\,\rho_{e}(m),
	\end{align}
	where to obtain the last equality we used the fact that
	\begin{equation}
		\Pr[X_e(t)=0]_{\pi}=\sum_{\substack{h^t\in\{0,1\}^{2t-1}:\\X_e(t)(h^t)=0}}\Pr[H_e(t)=h^t]_{\pi}=1-\Pr[X_e(t)=1]_{\pi}.
	\end{equation}
	We also rearranged the sum with respect to the set $\{h^t\in\{0,1\}^{2t-1}:X_e(t)(h^t)=1\}$ so that the sum is with respect to the possible values of the memory time $m$, which in general depends on the policy $\pi$. This completes the proof.

\subsection{Proof of Theorem~\ref{thm-succ_prob_fid_cq}}\label{app-thm-succ_prob_fid_cq_pf}

	To see the first equality in \eqref{eq-link_value_prob_via_state}, observe that
	\begin{equation}
		\Tr\left[\ket{1}\bra{1}_{X_t}\widehat{\sigma}_{\!e}^{\pi}(t)\right]=\sum_{\substack{h^t\in\{0,1\}^{2t-1}:\\X_{e}(t)(h^t)=1}}\Pr[H_{e}(t)=h^t]_{\pi}.
	\end{equation}
	The expression on the right-hand side of this equation is equal to $\Pr[X_{e}(t)=1]_{\pi}$ by definition of the random variable $X_{e}(t)$. The second equality in \eqref{eq-link_value_prob_via_state} holds because $X_{e}(t)$ is a binary/Bernoulli random variable.
	
	To see \eqref{eq-avg_Ftilde}, we first use the definition of expectation to get
	\begin{equation}
		\mathbb{E}[\widetilde{F}_{\!e}(t)]_{\pi}=\sum_{m} f_e(m)\Pr[X_{e}(t)=1,M_{e}(t)=m]_{\pi},
	\end{equation}
	where the sum is with respect to all possible values of the random variable $M_{e}(t)$, which depends on the policy $\pi$. Then, by Theorem~\ref{thm-link_quantum_state},
	\begin{align}
		\Tr\left[\left(\ket{1}\bra{1}_{X_{t}}\otimes\psi\right)\widehat{\sigma}_{\!e}^{\pi}(t)\right]&=\sum_{\substack{h^{t}\in\{0,1\}^{2t-1}:\\X_e(t)(h^t)=1}}\Pr[H_{e}(t)=h^{t}]_{\pi}\bra{\psi}\rho_e(M_{e}(t)(h^{t}))\ket{\psi}\\
		&=\sum_{\substack{h^{t}\in\{0,1\}^{2t-1}:\\X_e(t)(h^t)=1}}\Pr[H_{e}(t)=h^{t}]_{\pi}f_e(M_{e}(t)(h^{t}))\\
		&=\sum_m f_e(m)\Pr[X_{e}(t)=1,M_{e}(t)=m]_{\pi},
	\end{align}
	where the last equality holds because the sum with respect to the set $\{h^{t}\in\{0,1\}^{2t-1}:X_e(t)(h^t)=1\}$ can be rearranged into a sum with respect to the possible values of the memory time $M_{e}(t)$ when the elementary link is active. This completes the proof.

\subsection{Proof of Theorem~\ref{thm-opt_policy}}\label{app-elem_link_QDP_alg}

	We start with a lemma.
	
	\begin{lemma}\label{lem-QDP_policy_eval}
		Let $G=(V,E)$ be the graph corresponding to the elementary links of a quantum network, let $e\in E$, let $\psi\equiv\ket{\psi}\bra{\psi}$ be a pure target state, and let $\pi=(d_1,d_2,\dotsc,d_T)$ be a $T$-step policy with $T\geq 1$. Then,
		\begin{equation}
			\mathbb{E}[\widetilde{F}_{\!e}(T+1)]_{\pi}=\sum_{x_1,a_1=0}^1 d_1(x_1)(a_1)\,v_2^{\pi}(x_1,a_1),
		\end{equation}
		where
		\begin{equation}\label{eq-QDP_policy_eval_1}
			v_t^{\pi}(h^{t-1},a_{t-1})\coloneqq\sum_{x_t,a_t=0}^1 d_t(h^{t-1},a_{t-1},x_t)(a_t)\,v_{t+1}^{\pi}(h^{t-1},a_{t-1},x_t,a_t)
		\end{equation}
		for all $2\leq t\leq T$, $h^{t-1}\in\{0,1\}^{2t-3}$, and $a_{t-1}\in\{0,1\}$, and
		\begin{equation}\label{eq-QDP_policy_eval_2}
			v_{T+1}^{\pi}(h^T,a_T)\coloneqq p_{e}^{y}(1-p_{e})^{x-y} f_e\!\left(M_e(T+1)(h^T,a_T,1)\right)
		\end{equation}
		for all $h^T\in\{0,1\}^{2T-1}$ and $a_T\in\{0,1\}$, where $y=N_{e}^{\text{succ}}(T+1)(h^T,a_T,1)$, $x=N_{e}^{\text{req}}(T+1)(h^T,a_T,1)$.~\defqed
	\end{lemma}
	\smallskip
	\begin{remark}	
		The functions $v_t^{\pi}$ that we have defined in the statement of the lemma can be thought of as analogous to \textit{action-value functions} in classical Markov decision processes; see, e.g., \cite{Put14_book,Sut18_book}. Also, observe that \eqref{eq-QDP_policy_eval_1} and \eqref{eq-QDP_policy_eval_2} specify a \textit{backward recursion algorithm} for evaluating a given policy. The algorithm proceeds by first evaluating the function $v_{T+1}^{\pi}$, then proceeding backwards, calculating $v_t^{\pi}$ for all $T\geq t\geq 2$ in order to finally obtain $\mathbb{E}[\widetilde{F}_{\!e}(T+1)]_{\pi}$.~\defqed
	\end{remark}
	
	\begin{proof}[Proof of Lemma~\ref{lem-QDP_policy_eval}]
		Using \eqref{eq-QDP_exp_reward_formula_1} and \eqref{eq-cond_state_unnormalized}, we have that
		\begin{align}
			\mathbb{E}[\widetilde{F}_{\!e}(T+1)]_{\pi}&=\sum_{\substack{h^{T+1}\in\{0,1\}^{2T+1}:\\x_{T+1}=1}}\Tr\left[\psi\left(\widetilde{\sigma}_{\!e}^{\pi}(T+1;h^{T+1})\right)\right]\\
			&=\sum_{\substack{h^{T+1}\in\{0,1\}^{2T+1}:\\x_{T+1}=1}}\left(\prod_{t=1}^T d_t(h_t^{T+1})(a_t)\right)\Tr\left[\psi\left(\widetilde{\sigma}_{\!e}^{(\mathsf{E})}(T+1;h^{T+1})\right)\right].\label{eq-QDP_policy_eval_pf1}
		\end{align}
		For brevity, we denote histories $h^{T+1}=(x_1,a_1,\dotsc,x_T,a_T,x_{T+1})$ such that $x_{T+1}=1$ by $(h^T,a_T,1)$, where $h^T=(x_1,a_1,\dotsc,a_{T-1},x_T)$. Then,
		\begin{align}
			\widetilde{\sigma}_{\!e}^{(\mathsf{E})}(T+1;h^{T+1})&\equiv\widetilde{\sigma}_{\!e}^{(\mathsf{E})}(T+1;h^T,a_T,1)\\
			&\coloneqq\left(\mathcal{T}_e^{x_T,a_T,1}\circ\dotsb\circ\mathcal{T}_e^{x_2,a_2,x_3}\circ\mathcal{T}_e^{x_1,a_1,x_2}\circ\mathcal{T}_e^{0;x_1}\right)(\rho_e^S). \label{eq-cond_state_unnormalized_env}
		\end{align}
		From this, we see that
		\begin{equation}
			\mathbb{E}[\widetilde{F}_{\!e}(T+1)]_{\pi}=\sum_{x_1,a_1=0}^1 d_1(x_1)(a_1)\, v_2^{\pi}(x_1,a_1),
		\end{equation}
		where
		\begin{equation}
			v_2^{\pi}(x_1,a_1)\coloneqq\sum_{\substack{x_2,\dotsc,x_T\in\{0,1\}\\a_2,\dotsc,a_T\in\{0,1\}}}\left(\prod_{t=2}^T d_t(h_t^{T+1})(a_t)\right)\Tr\left[\psi\left(\widetilde{\sigma}_{\!e}^{(\mathsf{E})}(T+1;h^T,a_T,1)\right)\right].
		\end{equation}
		Then, separating the sum with respect to $x_2,a_2\in\{0,1\}$ in the above equation leads to
		\begin{equation}
			v_2^{\pi}(x_1,a_1)=\sum_{x_2,a_2=0}^1 d_2(h_2^{T+1})(a_2)\, v_3^{\pi}(\underbrace{x_1,a_1,x_2}_{h^{T+1}_2},a_2),
		\end{equation}
		where
		\begin{equation}
			v_3^{\pi}(h^{T+1}_2,a_2)\coloneqq\sum_{\substack{x_3,\dotsc,x_T\in\{0,1\}\\a_3,\dotsc,a_T\in\{0,1\}}}\left(\prod_{t=3}^T d_t(h_t^{T+1})(a_t)\right)\Tr\left[\psi\left(\widetilde{\sigma}_{\!e}^{(\mathsf{E})}(T+1;h^T,a_T,1)\right)\right].
		\end{equation}
		Proceeding in this manner, we define functions $v_t^{\pi}(h^{t-1},a_{t-1})$ for $2\leq t\leq T$ as follows:
		\begin{equation}
			v_t^{\pi}(h^{t-1},a_{t-1})\coloneqq\sum_{x_t,a_t=0}^1 d_t(h^{t-1},a_{t-1},x_t)(a_t)\, v_{t+1}^{\pi}(h^{t-1},a_{t-1},x_t,a_t).
		\end{equation}
		In particular, for $t=T$, we have
		\begin{align}
			v_T^{\pi}(h^{T-1},a_{T-1})&\coloneqq\sum_{x_T,a_T\in\{0,1\}} d_T(h^{T-1},a_{T-1},x_T)(a_T) \Tr\left[\psi\left(\widetilde{\sigma}_{\!e}^{(\mathsf{E})}(T+1;h^T,a_T,1)\right)\right]\\
			&=\sum_{x_T,a_T=0}^1 d_T(h^{T-1},a_{T-1},x_T)(a_T)\, v_{T+1}^{\pi}(h^T,a_T),
		\end{align}
		where
		\begin{equation}
			v_{T+1}^{\pi}(h^T,a_T)\coloneqq\bra{\psi}\widetilde{\sigma}_{\!e}^{(\mathsf{E})}(T+1;h^T,a_T,1)\ket{\psi}.
		\end{equation}
		Now, it follows from Theorem~\ref{thm-link_quantum_state} that
		\begin{equation}
			\widetilde{\sigma}_{\!e}^{(\mathsf{E})}(T+1;h^T,a_T,1)=p_{e}^{y}(1-p_{e})^{x-y} \rho_e\!\left(M_e(T+1)(h^T,a_T,1)\right),
		\end{equation}
		where $y=N_{e}^{\text{succ}}(T+1)(h^T,a_T,1)$ and $x=N_{e}^{\text{req}}(T+1)(h^T,a_T,1)$. Therefore,
		\begin{equation}
			\bra{\psi}\widetilde{\sigma}_{\!e}^{(\mathsf{E})}(T+1;h^T,a_T,1)\ket{\psi}=p_{e}^{y}(1-p_{e})^{x-y} f_e\!\left(M_e(T+1)(h^T,a_T,1)\right).
		\end{equation}
		This completes the proof.
	\end{proof}
	\smallskip
	\begin{remark}\label{rem-QDP_policy_eval_2}
		There is an advantage to using the backward recursion algorithm (as presented in Lemma~\ref{lem-QDP_policy_eval}) to evaluate a policy, rather than simply using the definition of the expected fidelity in \eqref{eq-QDP_exp_reward_formula_1} or \eqref{eq-QDP_exp_reward_formula_2}. This advantage comes from the fact that the function $v_{T+1}^{\pi}$ defined in \eqref{eq-QDP_policy_eval_2} is independent of the policy $\pi$---it depends only on the elements of the environment and on the horizon time. Therefore, for a given elementary link specified by the edge $e$, and a given horizon time $T$, the function values $v_{T+1}^{\pi}(h^T,a_T)\equiv v_{T+1}(h^T,a_T)$ can be computed once and need never be computed again. Then, given a $T$-step policy $\pi$, the backward recursion algorithm can be used to quickly evaluate the expected fidelity.~\defqed
	\end{remark}
	
	\begin{proof}[Proof of Theorem~\ref{thm-opt_policy}]
		We start with the backward recursion algorithm given in Lemma~\ref{lem-QDP_policy_eval}. Let
		\begin{equation}
			\pi=\left(d_t(h^t):1\leq t\leq T,\,h^t\in\{0,1\}^{2t-1}\right)
		\end{equation}
		denote an arbitrary $T$-step policy, and let us define
		\begin{equation}
			\pi^{(t)}\coloneqq\left(d_j(h^j):t\leq j\leq T,\,h^j\in\{0,1\}^{2j-1}\right)
		\end{equation}
		to be the ``slices'' of $\pi$ from time $t$ onwards. By observing that $v_t^{\pi}$ depends only on the policy from time $t$ onwards, i.e., on $\pi^{(t)}$, we find that
		\begin{equation}
			\max_{\pi}\mathbb{E}[\widetilde{F}_{\!e}(T+1)]_{\pi}=\max_{d_1}\sum_{x_1,a_1=0}^1 d_1(x_1)(a_1)\max_{\pi^{(2)}}v_2^{\pi^{(2)}}(x_1,a_1),
		\end{equation}
		Then, using \eqref{eq-QDP_policy_eval_1}, we have that
		\begin{multline}
			\max_{\pi^{(t)}}v_t^{\pi^{(t)}}(h^{t-1},a_{t-1})\\=\max_{d_t}\sum_{x_t,a_t=0}^1 d_t(h^{t-1},a_{t-1},x_t)(a_t)\max_{\pi^{(t+1)}}v_{t+1}^{\pi^{(t+1)}}(h^{t-1},a_{t-1},x_t,a_t)
		\end{multline}
		for all $2\leq t\leq T$, $h^{t-1}\in\{0,1\}^{2t-3}$, and $a_{t-1}\in\{0,1\}$. By defining the functions
		\begin{align}
			w_t&\coloneqq\max_{\pi^{(t)}}v_t^{\pi^{(t)}}\quad\forall~2\leq t\leq T,\\
			w_{T+1}&\coloneqq v_{T+1}^{\pi}
		\end{align}
		(recall that $v_{T+1}^{\pi}$ does not depend on $\pi$; see Remark~\ref{rem-QDP_policy_eval_2}), we see that the optimization problem reduces to the following:
		\begin{multline}\label{eq-QDP_opt_pol_pf1}
			\max_{\pi}\mathbb{E}[\widetilde{F}_{\!e}(T+1)]_{\pi}\\=\max\left\{\sum_{x_1,a_1=0}^1\Tr\left[M_{A_1}^{1;a_1}\rho_{A_1}^{x_1}\right] w_2(x_1,a_1): \rho_{A_1}^{x_1}\geq 0,\,\Tr[\rho_{A_1}^{x_1}]=1,\,x_1\in\{0,1\}\right\},
		\end{multline}
		where $M_{A_1}^{1;a_1}=\ket{a_1}\bra{a_1}$ and
		\begin{multline}\label{eq-QDP_opt_pol_pf2}
			w_t(h^{t-1},a_{t-1})\\\coloneqq\max\left\{\sum_{x_t,a_t=0}^1\Tr\left[M_{A_t}^{t;a_t}\rho_{A_t}^{x_t}\right]w_{t+1}(h^t,a_t):\rho_{A_t}^{x_t}\geq 0,\,\Tr[\rho_{A_t}^{x_t}]=1,\,x_t\in\{0,1\}\right\}
		\end{multline}
		for all $2\leq t\leq T$, $h^{t-1}\in\{0,1\}^{2t-3}$, and $a_{t-1}\in\{0,1\}$, with $M_{A_t}^{t;a_t}=\ket{a_t}\bra{a_t}$.
		
		Now, observe that the objective function in \eqref{eq-QDP_opt_pol_pf1} can be written as
		\begin{align}
			\sum_{x_1,a_1=0}^1\Tr\left[M_{A_1}^{1;a_1}\rho_{A_1}^{x_1}\right]w_2(x_1,a_1)&=\sum_{x_1=0}^1\Tr\left[\left(\sum_{a_1=0}^1 M_{A_1}^{1;a_1}w_2(x_1,a_1)\right)\rho_{A_1}^{x_1}\right]\\
			&=\sum_{x_1=0}^1\Tr\left[\widetilde{M}_{A_1}^{1;x_1}\rho_{A_1}^{x_1}\right],
		\end{align}
		where in the last line we defined $\widetilde{M}_{A_1}^{t;x_1}\coloneqq\sum_{a_1=0}^1 M_{A_1}^{t;a_1}w_2(x_1,a_1)$. Similarly, in \eqref{eq-QDP_opt_pol_pf2}, the objective function can be written as
		\begin{align}
			\sum_{x_t,a_t=0}^1\Tr\left[M_{A_t}^{t;a_t}\rho_{A_t}^{x_t}\right]w_{t+1}(h^t,a_t)=\sum_{x_t=0}^1\Tr\left[\widetilde{M}_{A_t}^{t;h^t}\rho_{A_t}^{x_t}\right],
		\end{align}
		where $\widetilde{M}_{A_t}^{t;h^t}\coloneqq\sum_{a_t=0}^1M_{A_t}^{t;a_t}w_{t+1}(h^t,a_t)$. Therefore, we have
		\begin{align}
			\max_{\pi}\mathbb{E}[\widetilde{F}_{\!e}(T+1)]_{\pi}&=\max\left\{\sum_{x_1=0}^1\Tr\left[\widetilde{M}_{A_1}^{1;x_1}\rho_{A_1}^{x_1}\right]:\rho_{A_1}^{x_1}\geq 0,\,\Tr[\rho_{A_1}^{x_1}]=1,\,x_1\in\{0,1\}\right\},\label{eq-QDP_opt_pol_pf3}\\
			w_t(h^{t-1},a_{t-1})&=\max\left\{\sum_{x_t=0}^1\Tr\left[\widetilde{M}_{A_t}^{t;h^t}\rho_{A_t}^{x_t}\right]:\rho_{A_t}^{x_t}\geq 0,\,\Tr[\rho_{A_t}^{x_t}]=1,\,x_t\in\{0,1\}\right\},\label{eq-QDP_opt_pol_pf4}
		\end{align}
		for all $2\leq t\leq T$, $h^{t-1}\in\{0,1\}^{2t-3}$, and $a_{t-1}\in\{0,1\}$. Now, notice that for every time step the optimization problem is of the form
		\begin{equation}\label{eq-opt_reward_generic}
			\begin{array}{l l} \text{maximize} & \displaystyle \sum_{s\in\mathcal{S}}\Tr[H^s\rho^s] \\[0.6cm] \text{subject to} & \rho^s\geq 0,\,\Tr[\rho^s]=1\quad\forall~s\in\mathcal{S}, \end{array}
		\end{equation}
		where $\mathcal{S}$ is some finite set and $\{H^s\}_{s\in\mathcal{S}}$ is some set of Hermitian operators. Because the optimization is with respect to the independent variables $\{\rho^s\}_{s\in\mathcal{S}}$, the maximum can be brought inside the sum, so that the solution to the optimization problem is simply
		\begin{equation}
			\sum_{s\in\mathcal{S}}\lambda_{\max}(H^s),
		\end{equation}
		where we have used to fact that, for every Hermitian operator $H$,
		\begin{equation}\label{eq-max_EV_SDP}
			\max_{\rho:\rho\geq 0,\Tr[\rho]=1}\Tr[H\rho]=\lambda_{\max}(H),
		\end{equation}
		where $\lambda_{\max}(H)$ denotes the largest eigenvalue of $H$ \cite{BV96}. A state $\rho$ achieving the maximum is $\ket{\lambda_{\max}(H)}\bra{\lambda_{\max}(H)}$, where $\ket{\lambda_{\max}(H)}$ is an eigenvector corresponding to the largest eigenvalue of $H$. Applying this result to \eqref{eq-QDP_opt_pol_pf3} and \eqref{eq-QDP_opt_pol_pf4} leads to the desired result, because
		\begin{equation}\label{eq-QDP_opt_pol_pf5}
			\lambda_{\max}(\widetilde{M}_{A_t}^{t;h^t})=\lambda_{\max}\left(\sum_{a_t=0}^1 w_{t+1}(h^t,a_t)\ket{a_t}\bra{a_t}\right)=\max_{a_t\in\{0,1\}}w_{t+1}(h^t,a_t).
		\end{equation}
		Note that the optimal action at the $t^{\text{th}}$ time step for the history $h^t\in\{0,1\}^{2t-1}$ is given by the value $a_t$ that achieves the maximum in \eqref{eq-QDP_opt_pol_pf5}, which gives us the result in \eqref{eq-backward_recursion_policy}. This completes the proof.
	\end{proof}

\subsection{Proof of Theorem~\ref{thm-network_QDP_forward_recursion_policy}}\label{app-network_QDP_forward_recursion_policy_pf}

	Let $t\geq 1$, and consider a policy $\pi$ up to time $t$. With respect to this policy, the classical-quantum state of the elementary link is (recall \eqref{eq-elem_link_cq_state_QDP})
	\begin{equation}
		\widehat{\sigma}_{\!e}^{\pi}(t)=\sum_{h^t\in\{0,1\}^{2t-1}}\ket{h^t}\bra{h^t}\otimes\widetilde{\sigma}_{\!e}^{\pi}(t;h^t).
	\end{equation}
	Now, for an arbitrary decision function $d_t$ corresponding to the decision at time $t$, we obtain the following classical-quantum state of the elementary link at time $t+1$:
	\begin{align}
		\widehat{\sigma}_{\!e}^{(\pi,d_t)}(t+1)&=\sum_{h^{t+1}\in\{0,1\}^{2t+1}}\ket{h^{t+1}}\bra{h^{t+1}}\otimes\widetilde{\sigma}_{\!e}^{(\pi,d_t)}(t+1;h^{t+1})\\
		&=\sum_{\substack{h^t\in\{0,1\}^{2t-1},\\a_t,x_{t+1}\in\{0,1\}}}\ket{h^t,a_t,x_{t+1}}\bra{h^t,a_t,x_{t+1}}\otimes d_t(h^t)(a_t)\mathcal{T}_{e}^{x_t,a_t,x_{t+1}}\left(\widetilde{\sigma}_{\!e}^{\pi}(t;h^t)\right).
	\end{align}
	Then,
	\begin{align}
		\mathbb{E}[\widetilde{F}_{\!e}(t+1)]_{(\pi,d_t)}&=\Tr\left[\left(\ket{1}\bra{1}_{X_{t+1}}\otimes\psi\right)\widehat{\sigma}_{\!e}^{(\pi,d_t)}(t+1)\right]\\
		&=\sum_{h^t\in\{0,1\}^{2t-1}}\sum_{a_t=0}^1 d_t(h^t)(a_t)\bra{\psi}\mathcal{T}_e^{x_t,a_t,1}(\widetilde{\sigma}_{\!e}^{\pi}(t;h^t))\ket{\psi}\\
		&=\sum_{h^t\in\{0,1\}^{2t-1}}\Tr[N^{h^t}\rho^{h^t}],
	\end{align}
	where
	\begin{align}
		N^{h^t}&\coloneqq\sum_{a_t=0}^1 \bra{\psi}\mathcal{T}_e^{x_t,a_t,1}(\widetilde{\sigma}_{\!e}^{\pi}(t;h^t))\ket{\psi}\,\ket{a_t}\bra{a_t},\\
		\rho^{h^t}&\coloneqq\sum_{a_t=0}^1d_t(h^t)(a_t)\ket{a_t}\bra{a_t}.
	\end{align}
	So we have that
	\begin{equation}
		\max_{d_t}\mathbb{E}[\widetilde{F}_{\!e}(t+1)]_{(\pi,d_t)}=\max_{\{\rho^{h^t}\}_{h^t}}\sum_{h^t\in\{0,1\}^{2t-1}}\Tr[N^{h^t}\rho^{h^t}],
	\end{equation}
	which is an optimization problem of the form in \eqref{eq-opt_reward_generic}. The optimal value is therefore equal to $\sum_{h^t\in\{0,1\}^{2t-1}}\lambda_{\max}(N^{h^t})$, with associated optimal decision function, which we denote by $d_t^{\text{FR}}$, equal to
	\begin{equation}\label{eq-forward_recursion_pf}
		d_t^{\text{FR}}(h^t)=\argmax_{a_t\in\{0,1\}}\bra{\psi}\mathcal{T}_e^{x_t,a_t,1}(\widetilde{\sigma}_{\!e}^{\pi}(t;h^t))\ket{\psi}\quad\forall\,h^t\in\{0,1\}^{2t-1}.
	\end{equation}
	Now,
	\begin{align}
		a_t=0\Rightarrow \bra{\psi}\mathcal{T}_{e}^{x_t,0,1}\left(\widetilde{\sigma}_{\!e}^{\pi}(t;h^t)\right)\ket{\psi}&=\Pr[H_e(t)=h^t]_{\pi}x_t\bra{\psi}\mathcal{N}_{e}\left(\rho_{e}\!\left(M_{e}(t)(h^t)\right)\right)\ket{\psi}\nonumber\\[0.2cm]
		&=\Pr[H_e(t)=h^t]_{\pi}x_t f_e(M_{e}(t)(h^t)+1),\\[0.3cm]
		a_t=1\Rightarrow \bra{\psi}\mathcal{T}_{e}^{x_t,1,1}\left(\widetilde{\sigma}_{\!e}^{\pi}(t;h^t)\right)\ket{\psi}&=\Pr[H_e(t)=h^t]_{\pi}p_{e} f_e(0).
	\end{align}
	So the task is to determine which of the two quantities, $x_tf_e(M_{e}(t)(h^t)+1)$ and $p_{e}f_e(0)$, is higher, where $p_e$ is the  success probability. If the elementary link is not active at time $t$, meaning that $x_t=0$, then requesting a link, i.e., selecting $a_t=1$, gives a higher value than selecting $a_t=0$ (because the latter leads to a value of zero for the objective function in \eqref{eq-forward_recursion_pf} for all $p_e>0$). On the other hand, if the elementary link is active at time $t$, then the task is to compare $f_e(M_{e}(t)(h^t)+1)$ and $p_{e}f_e(0)$ for every history $h^t\in\{0,1\}^{2t-1}$. Which of these two quantities is higher (and thus which action is taken) depends on the success probability $p_{e}\in(0,1)$, the noise model of the quantum memory, and on the target pure state $\psi$. We conclude that the decision function $d_t^{\text{FR}}$ in \eqref{eq-forward_recursion_pf} is given by \eqref{eq-network_QDP_forward_recursion_policy}, as claimed.

\subsection{Other figures of merit}\label{sec-other_figs_of_merit}
	
	In addition to the figures of merit defined in Section~\ref{sec-figures_of_merit_general}, there are two other figures of merit of interest that we can consider.

	\begin{definition}[Figures of merit for an elementary link policy]\label{def-fig_of_merit_elem_link}
		Let $G=(V,E)$ be the graph corresponding to the elementary links of a quantum network, let $e\in E$, and let $t\geq 1$. Given a policy $\pi$ for the elementary link corresponding to $e$, we define the following figures of merit for the policy $\pi$.
		\begin{itemize}		
			\item The expected waiting time for the elementary link to become active, i.e., $\mathbb{E}[W_e(t_{\text{req}})]_{\pi}$, where
				\begin{equation}
					W_e(t_{\text{req}})\coloneqq\sum_{t=t_{\text{req}}+1}^{\infty} tX_{e}(t)\prod_{j=t_{\text{req}}+1}^{t-1}(1-X_e(j)),
				\end{equation}
				and $t_{\text{req}}\geq 0$ is the time at which the initial request for the elementary link is made. In particular,
				\begin{equation}\label{eq-waiting_time_prob_late_request}
					\Pr[W_e(t_{\text{req}})=t]_{\pi}=\Pr[X_e(t_{\text{req}}+1)=0,\dotsc,X_e(t_{\text{req}}+t)=1]_{\pi}.
				\end{equation}
			
			\item The expected success rate of the elementary link, i.e., $\mathbb{E}[S_{\!e}(t)]_{\pi}$, where
				\begin{equation}
					S_{\!e}(t)\coloneqq\frac{\displaystyle\sum_{j=1}^t A_e(j-1)X_e(j)}{\displaystyle\sum_{j=1}^t A_e(j-1)}.
				\end{equation}
				The success rate is simply the ratio of the number of successful transmissions when a request is made to the total number of requests made within time $t$. We let $A_e(0)\equiv 1$.~\defqed
		\end{itemize}	
	\end{definition}

\section{Details of the memory-cutoff policy}\label{app-mem_cutoff_details}

	In this appendix, we go through the details of the memory-cutoff policy. We start by making the following initial remarks about the definition of the memory-cutoff policy.
	\begin{itemize}
		
		\item Observe that for $t^{\star}=\infty$, we can write the decision function $d_t^{\infty}$ in the following simpler form:
			\begin{equation}\label{eq-mem_cutoff_infty_dt_simpler}
				d_t^{\infty}(h^t)=\left\{\begin{array}{l l} 0 & \text{if }X_e(t)(h^t)=1, \\ 1 & \text{if } X_e(t)(h^t)=0. \end{array}\right.
			\end{equation}
			In other words, for the $t^{\star}=\infty$ memory-cutoff policy, it suffices to look at the status of the elementary link at the current time in order to determine the next action.
		
		\item We denote the probability distribution of the history random variable $H_e(t)$ with respect to the $t^{\star}$ memory-cutoff policy by $\Pr[H_e(t)=h^t]_{t^{\star}}$, and similarly for the marginal distributions.
		
		\item It is straightforward to see that the following conditional probabilities hold for all $t^{\star}\in\mathbbm{N}$:
			\begin{align}
				&\Pr[X_e(t+1)=1,M_e(t+1)=0|X_e(t)=0,M_e(t)=-1]=p_e,\label{eq-cutoff_trans_prob1}\\
				&\Pr[X_e(t+1)=1,M_e(t+1)=0|X_e(t)=1,M_e(t)=t^{\star}]=p_e,\label{eq-cutoff_trans_prob2}\\
				&\Pr[X_e(t+1)=0,M_e(t+1)=-1|X_e(t)=0,M_e(t)=-1]=1-p_e,\label{eq-cutoff_trans_prob3}\\
				&\Pr[X_e(t+1)=0,M_e(t+1)=-1|X_e(t)=1,M_e(t)=t^{\star}]=1-p_e,\label{eq-cutoff_trans_prob4}\\
				&\Pr[X_e(t+1)=1,M_e(t+1)=m+1|X_e(t)=1,M_e(t)=m]=1,\,\, 0\leq m\leq t^{\star}-1.\label{eq-cutoff_trans_prob5}
			\end{align}
			For $t^{\star}=0$, we only have the transition probabilities in \eqref{eq-cutoff_trans_prob1}--\eqref{eq-cutoff_trans_prob4}. Since these transition probabilities are time independent, and because the pair $(X_e(t+1),M_e(t+1))$ depends only on $(X_e(t),M_e(t))$, we have that $((X_e(t),M_e(t)):t\geq 1)$ is a stationary/time-homogeneous Markov process. As such, the conditional probabilities can be organized into the transition matrix $T_e(t^{\star})$, $t^{\star}\in\mathbbm{N}_0$, defined as follows:
			\begin{multline}\label{eq-cutoff_trans_prob6}
				\left(T_e(t^{\star})\right)_{\substack{x,m\\x',m'}}\coloneqq \Pr[X_e(t+1)=x,M_e(t+1)=m|X_e(t)=x',M_e(t)=m'],\\x,x'\in\{0,1\},~m,m'\in\{-1,0,1,\dotsc,t^{\star}\},
			\end{multline}
			with the matrix entries corresponding to the conditional probabilities not in \eqref{eq-cutoff_trans_prob1}--\eqref{eq-cutoff_trans_prob5} being set to zero. It is easy to verify that the results of Theorem~\ref{thm-avg_mem_status_inf} correspond to the stationary distribution of the Markov process defined by the transition matrices $T_e(t^{\star})$, $t^{\star}\in\mathbb{N}_0$. In other words, if we let $\vec{p}(t^{\star})$ be a column vector with elements given by
			\begin{equation}
				p_{x,m}(t^{\star})\coloneqq\lim_{t\to\infty}\Pr[M_e(t)=m,X_e(t)=x],
			\end{equation}
			for all $m\in\{-1,0,1,\dotsc,t^{\star}\}$ and $x\in\{0,1\}$, then it is straightforward to show that 
			\begin{equation}
				T_e(t^{\star})\vec{p}(t^{\star})=\vec{p}(t^{\star}).
			\end{equation}
			
			For $t^{\star}=\infty$, using \eqref{eq-mem_cutoff_infty_dt_simpler}, the transition probabilities in \eqref{eq-link_trans_prob_1}--\eqref{eq-link_trans_prob_4} can be simplified to the following:
			\begin{align}
				\Pr[X_e(t+1)=0|X_e(t)=0]&=1-p_e,\\
				\Pr[X_e(t+1)=1|X_e(t)=0]&=p_e,\\
				\Pr[X_e(t+1)=0|X_e(t)=1]&=0,\\
				\Pr[X_e(t+1)=1|X_e(t)=1]&=1.
			\end{align}
			These transition probabilities are time independent and Markovian, so they can be organized into the transition matrix $T_e(\infty)$ defined as follows:
			\begin{equation}\label{eq-cutoff_trans_infty}
				\left(T_e(\infty)\right)_{\substack{x\\x'}}\coloneqq \Pr[X_e(t+1)=x|X_e(t)=x'],\quad x,x'\in\{0,1\}.
			\end{equation}
	\end{itemize}
	
	\begin{table}
		\centering
		{\scriptsize\begin{tabular}{|c c c c c c c c c c||c|c|c|c|c}
			\hline $x_1$ & $x_2$ & $x_3$ & $x_4$ & $x_5$ & $x_6$ & $x_7$ & $x_8$ & $x_9$ & $x_{10}$ & $Y_e^{t^{\star}}\!(t)(h^t)$ & $Z_e^{t^{\star}}\!(t)(h^t)$ & $\Pr[H_e(t)=h^t]_{t^{\star}}$ & $M_e(t)(h^t)$ \\ \hline\hline
			0 & 0 & 0 & 0 & 0 & 0 & 0 & 0 & 0 & 0 & 0 & 0 & $(1-p_e)^{10}$ & $-1$  \\ \hline
			1 & 1 & 1 & 1 & 0 & 0 & 0 & 0 & 0 & 0 & 1 & 0 & $p_e(1-p_e)^6$ & $-1$  \\
			0 & 1 & 1 & 1 & 1 & 0 & 0 & 0 & 0 & 0 & 1 & 0 & $p_e(1-p_e)^6$ & $-1$  \\
			0 & 0 & 1 & 1 & 1 & 1 & 0 & 0 & 0 & 0 & 1 & 0 & $p_e(1-p_e)^6$ & $-1$  \\
			0 & 0 & 0 & 1 & 1 & 1 & 1 & 0 & 0 & 0 & 1 & 0 & $p_e(1-p_e)^6$ & $-1$  \\
			0 & 0 & 0 & 0 & 1 & 1 & 1 & 1 & 0 & 0 & 1 & 0 & $p_e(1-p_e)^6$ & $-1$  \\
			0 & 0 & 0 & 0 & 0 & 1 & 1 & 1 & 1 & 0 & 1 & 0 & $p_e(1-p_e)^6$ & $-1$  \\ \hline
			1 & 1 & 1 & 1 & 1 & 1 & 1 & 1 & 0 & 0 & 2 & 0 & $p_e^2(1-p_e)^2$ & $-1$  \\
			1 & 1 & 1 & 1 & 0 & 1 & 1 & 1 & 1 & 0 & 2 & 0 & $p_e^2(1-p_e)^2$ & $-1$  \\
			0 & 1 & 1 & 1 & 1 & 1 & 1 & 1 & 1 & 0 & 2 & 0 & $p_e^2(1-p_e)^2$ & $-1$  \\ \hline
			0 & 0 & 0 & 0 & 0 & 0 & 0 & 0 & 0 & 1 & 0 & 1 & $p_e(1-p_e)^{9}$ & 0  \\
			0 & 0 & 0 & 0 & 0 & 0 & 0 & 0 & 1 & 1 & 0 & 2 & $p_e(1-p_e)^{8}$ & 1  \\
			0 & 0 & 0 & 0 & 0 & 0 & 0 & 1 & 1 & 1 & 0 & 3 & $p_e(1-p_e)^{7}$ & 2  \\
			0 & 0 & 0 & 0 & 0 & 0 & 1 & 1 & 1 & 1 & 0 & 4 & $p_e(1-p_e)^{6}$ & 3  \\ \hline
			1 & 1 & 1 & 1 & 0 & 0 & 0 & 0 & 0 & 1 & 1 & 1 & $p_e^2(1-p_e)^5$ & 0  \\
			0 & 1 & 1 & 1 & 1 & 0 & 0 & 0 & 0 & 1 & 1 & 1 & $p_e^2(1-p_e)^5$ & 0  \\
			0 & 0 & 1 & 1 & 1 & 1 & 0 & 0 & 0 & 1 & 1 & 1 & $p_e^2(1-p_e)^5$ & 0  \\
			0 & 0 & 0 & 1 & 1 & 1 & 1 & 0 & 0 & 1 & 1 & 1 & $p_e^2(1-p_e)^5$ & 0 \\
			0 & 0 & 0 & 0 & 1 & 1 & 1 & 1 & 0 & 1 & 1 & 1 & $p_e^2(1-p_e)^5$ & 0  \\
			0 & 0 & 0 & 0 & 0 & 1 & 1 & 1 & 1 & 1 & 1 & 1 & $p_e^2(1-p_e)^5$ & 0  \\ \hline
			1 & 1 & 1 & 1 & 0 & 0 & 0 & 0 & 1 & 1 & 1 & 2 & $p_e^2(1-p_e)^4$ & 1  \\
			0 & 1 & 1 & 1 & 1 & 0 & 0 & 0 & 1 & 1 & 1 & 2 & $p_e^2(1-p_e)^4$ & 1  \\
			0 & 0 & 1 & 1 & 1 & 1 & 0 & 0 & 1 & 1 & 1 & 2 & $p_e^2(1-p_e)^4$ & 1  \\
			0 & 0 & 0 & 1 & 1 & 1 & 1 & 0 & 1 & 1 & 1 & 2 & $p_e^2(1-p_e)^4$ & 1  \\
			0 & 0 & 0 & 0 & 1 & 1 & 1 & 1 & 1 & 1 & 1 & 2 & $p_e^2(1-p_e)^4$ & 1  \\ \hline
			1 & 1 & 1 & 1 & 0 & 0 & 0 & 1 & 1 & 1 & 1 & 3 & $p_e^2(1-p_e)^3$ & 2  \\
			0 & 1 & 1 & 1 & 1 & 0 & 0 & 1 & 1 & 1 & 1 & 3 & $p_e^2(1-p_e)^3$ & 2  \\
			0 & 0 & 1 & 1 & 1 & 1 & 0 & 1 & 1 & 1 & 1 & 3 & $p_e^2(1-p_e)^3$ & 2  \\
			0 & 0 & 0 & 1 & 1 & 1 & 1 & 1 & 1 & 1 & 1 & 3 & $p_e^2(1-p_e)^3$ & 2  \\ \hline
			1 & 1 & 1 & 1 & 0 & 0 & 1 & 1 & 1 & 1 & 1 & 4 & $p_e^2(1-p_e)^2$ & 3  \\
			0 & 1 & 1 & 1 & 1 & 0 & 1 & 1 & 1 & 1 & 1 & 4 & $p_e^2(1-p_e)^2$ & 3  \\
			0 & 0 & 1 & 1 & 1 & 1 & 1 & 1 & 1 & 1 & 1 & 4 & $p_e^2(1-p_e)^2$ & 3  \\ \hline
			1 & 1 & 1 & 1 & 1 & 1 & 1 & 1 & 0 & 1 & 2 & 1 & $p_e^3(1-p_e)$ & 0  \\
			1 & 1 & 1 & 1 & 0 & 1 & 1 & 1 & 1 & 1 & 2 & 1 & $p_e^3(1-p_e)$ & 0  \\
			0 & 1 & 1 & 1 & 1 & 1 & 1 & 1 & 1 & 1 & 2 & 1 & $p_e^3(1-p_e)$ & 0 \\ \hline
			1 & 1 & 1 & 1 & 1 & 1 & 1 & 1 & 1 & 1 & 2 & 2 & $p_e^3$ & 1 \\ \hline
		\end{tabular}}
		\caption{Elementary link status sequences $(x_1,x_2,\dotsc,x_{10})$ for an elementary link in a quantum network, specified by an edge $e$ of the corresponding graph $G$, with $t^{\star}=3$ up to time $t=10$. The quantity $Y_e^{t^{\star}}\!(t)$ is the number of full blocks of ones in the elementary link status sequence up to time $t-1$, and $Z_e^{t^{\star}}\!(t)$ is the number of trailing ones in the elementary link status sequence up to time $t$. $M_e(t)$ is the memory time at time $t$, given by the formula in \eqref{eq-mem_time_def2a}.}\label{table-link_cutoff_example}
	\end{table}
	
	Next, let us consider what the histories $h^t$ look like through an particular example. Consider an elementary link for which $t^{\star}=3$, and let us consider the status of the elementary link up to time $t=10$. Given that each elementary link request succeeds with probability $p_e$ and fails with probability $1-p_e$, in Table~\ref{table-link_cutoff_example} we write down the probability for each sequence of elementary link statuses according to the formula in \eqref{eq-hist_prob_general}. Note that we only include those histories that have non-zero probability (indeed, some sequences $h^t=(x_1,a_1,\dotsc,a_{t-1},x_t)\in\{0,1\}^{2t-1}$ have zero probability with respect to the memory-cutoff policy). We also include in the table the memory times $M_e(t)$, which are calculated using the formula in \eqref{eq-mem_time_def2a}. Since the memory-cutoff policy is deterministic, it suffices to keep track only of the elementary link statuses $(x_1,\dotsc,x_t)$ and not the action values, because the action values are given deterministically by the elementary link statuses. For the elementary link status sequences, we define two quantities that are helpful for obtaining analytic formulas for the figures of merit defined in Section~\ref{sec-figures_of_merit_general}. The first quantity is $Y_e^{t^{\star}}\!(t)$, which we define to be the number of full blocks of ones (having length $t^{\star}+1$) in elementary link status sequences up to time $t-1$. The values that $Y_e^{t^{\star}}\!(t)$ can take are $0,1,\dotsc,\floor{\frac{t-1}{t^{\star}+1}}$ if $t^{\star}<\infty$, and 0 if $t^{\star}=\infty$. We also define the quantity $Z_e^{t^{\star}}\!(t)$ to be the number of trailing ones in elementary link status sequences up to time $t$. The values that $Z_e^{t^{\star}}\!(t)$ can take are $0,1,\dotsc,t^{\star}+1$ if $t^{\star}<\infty$, and $0,1,\dotsc,t$ if $t^{\star}=\infty$.
	
	Using the random variables $Y_e^{t^{\star}}\!(t)$ and $Z_e^{t^{\star}}\!(t)$, along with the general formula in \eqref{eq-hist_prob_general}, we obtain the following formula for the probability of histories with non-zero probability.
	
	\begin{proposition}\label{prop-time_seq_prob}
		For every time $t\geq 1$, cutoff $t^{\star}\in\mathbb{N}_0$, success probability $p_e\in[0,1]$, and history $h^t=(x_1,a_1,x_2,\allowbreak a_2,\dotsc,a_{t-1},x_t)\in\{0,1\}^{2t-1}$ with non-zero probability,
		\begin{multline}\label{eq-time_seq_prob}
			\Pr[H_e(t)=h^t]_{t^{\star}}=p_e^{Y_e^{t^{\star}}\!(t)(h^t)}(1-p_e)^{t-(t^{\star}+1)Y_e^{t^{\star}}\!(t)(h^t)}\delta_{Z_e^{t^{\star}}\!(t)(h^t),0}\\+\left(1-\delta_{Z_e^{t^{\star}}\!(t)(h^t),0}\right)p_e^{Y_e^{t^{\star}}\!(t)(h^t)+1}(1-p_e)^{t-Z_e^{t^{\star}}\!(t)(h^t)-(t^{\star}+1)Y_e^{t^{\star}}\!(t)(h^t)},
		\end{multline}
		where $Y_e^{t^{\star}}\!(t)(h^t)$ is defined to be the number of full blocks of ones of length $t^{\star}+1$ up to time $t-1$ in the sequence $(x_1,x_2,\dotsc,x_t)$ of elementary link statuses, and $Z_e^{t^{\star}}\!(t)(h^t)$ is defined to be the number of trailing ones in the sequence $(x_1,x_2,\dotsc,x_t)$. For $t^{\star}=\infty$,
		\begin{equation}\label{eq-time_seq_prob_infty}
			\Pr[H_e(t)=h^t]_{\infty}=(1-p_e)^t\delta_{Z_e^{\infty}(t)(h^t),0}+\left(1-\delta_{Z_e^{\infty}(t)(h^t),0}\right)p_e(1-p_e)^{t-Z_e^{\infty}(t)(h^t)}.~\defqedspec
		\end{equation}
	\end{proposition}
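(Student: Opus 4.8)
The plan is to derive both \eqref{eq-time_seq_prob} and \eqref{eq-time_seq_prob_infty} from the general formula \eqref{eq-hist_prob_general} of Theorem~\ref{prop-link_quantum_state} by computing the two exponents appearing there, $N_{\text{succ}}(t)(h^t)$ and $N_{\text{req}}(t)(h^t)-N_{\text{succ}}(t)(h^t)$, in terms of the block statistics $Y_1(t)(h^t)$ and $Y_2(t)(h^t)$. The first step is to use that the memory cutoff policy is deterministic: each decision factor is $d_j(h_j^t)(a_j)=\delta_{a_j,M_{t^{\star}}'(j)(h_j^t)}$, so for a history $h^t$ of non-zero probability the action subsequence $a_1,\dots,a_{t-1}$ is completely determined by the link-value subsequence $x_1,\dots,x_t$, and hence $\prod_{j=1}^{t-1}d_j(h_j^t)(a_j)=1$. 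Thus \eqref{eq-hist_prob_general} collapses to $\Pr[H(t)=h^t]=p^{N_{\text{succ}}(t)}(1-p)^{N_{\text{req}}(t)-N_{\text{succ}}(t)}$, and the remaining task is purely combinatorial: to count requests and successes along the link-value sequence.

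Next I would read off the structure of a non-zero-probability link-value sequence from the reduced transition probabilities \eqref{eq-cutoff_trans_gen_prob1}--\eqref{eq-cutoff_trans_gen_prob4}: a request is made at time step $1$, and at $j\ge 2$ a request is made exactly when $x_{j-1}=0$ or when the run of ones ending at $j-1$ has length a positive multiple of $t^{\star}+1$; moreover a successful request at time $j$ gives $x_j=1$, a failed one gives $x_j=0$, and in the absence of a request $x_j=x_{j-1}=1$. Two consequences follow. First, $x_j=0$ if and only if a request was made at step $j$ and failed, so the number of failed requests $N_{\text{req}}(t)-N_{\text{succ}}(t)$ equals the number of zeros in $(x_1,\dots,x_t)$, namely $t-\sum_{j=1}^t x_j$, and one checks the bookkeeping identity $\sum_{j=1}^t x_j=(t^{\star}+1)Y_1(t)+Y_2(t)$ by grouping the ones into the complete length-$(t^{\star}+1)$ blocks counted by $Y_1(t)$ and the trailing partial block counted by $Y_2(t)$. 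Second, a successful request occurs precisely at the first step of each complete block of ones and at the first step of the trailing partial block, when one is present, which gives $N_{\text{succ}}(t)=Y_1(t)+\left(1-\delta_{Y_2(t),0}\right)$. Substituting these two expressions into $p^{N_{\text{succ}}(t)}(1-p)^{N_{\text{req}}(t)-N_{\text{succ}}(t)}$ and separating the cases $Y_2(t)=0$ and $Y_2(t)\neq 0$ yields \eqref{eq-time_seq_prob} verbatim. The case $t^{\star}=\infty$ is the same argument with $Y_1(t)\equiv 0$: a non-zero-probability sequence is of the form $0^k1^{t-k}$, it has $t-Y_2(t)$ failed and $1-\delta_{Y_2(t),0}$ successful requests, and \eqref{eq-time_seq_prob_infty} follows.

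The delicate point, and the one I expect to take the most care, is the bookkeeping for $N_{\text{succ}}$ and for $\sum_j x_j$ at the boundary, because the definition of $Y_1(t)$ (counting full blocks up to time $t-1$) and that of $Y_2(t)$ (counting trailing ones up to time $t$) interact when the trailing run of ones is itself a complete block or a concatenation of complete blocks: if the trailing run has length $q(t^{\star}+1)$ one must take $Y_2(t)=t^{\star}+1$ and assign the first $q-1$ of those blocks to $Y_1(t)$, whereas if it has length $q(t^{\star}+1)+r$ with $1\le r\le t^{\star}$ one takes $Y_2(t)=r$ and assigns all $q$ blocks to $Y_1(t)$. A cleaner way to sidestep this is induction on $t$: the base case $t=1$ is immediate, and for the inductive step one writes $\Pr[H(t+1)=(h^t,a_t,x_{t+1})]=\Pr[X(t+1)=x_{t+1}\,|\,H(t)=h^t,A(t)=a_t]\,\Pr[H(t)=h^t]$ (the policy factor being $1$), uses \eqref{eq-cutoff_trans_gen_prob1}--\eqref{eq-cutoff_trans_gen_prob4} with $a_t=M_{t^{\star}}'(t)(h^t)$, and checks the four combinations of $a_t\in\{0,1\}$ and $x_{t+1}\in\{0,1\}$, verifying each time that the induced change in $(Y_1,Y_2)$ is consistent with the multiplicative factor $p^{a_tx_{t+1}}(1-p)^{a_t(1-x_{t+1})}$ in the claimed formula. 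This approach handles the awkward cases automatically, since $a_t=1$ is forced precisely when $Y_2(t)\in\{0,t^{\star}+1\}$.
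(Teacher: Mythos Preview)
Your proposal is correct and follows essentially the same route as the paper: both start from the general formula \eqref{eq-hist_prob_general}, use that the deterministic memory cutoff policy makes the decision-probability product equal to $1$, and then identify $N_{\text{succ}}(t)=Y_1(t)+1-\delta_{Y_2(t),0}$ and $N_{\text{req}}(t)-N_{\text{succ}}(t)=t-(t^{\star}+1)Y_1(t)-Y_2(t)$, with the $t^{\star}=\infty$ case reducing to $Y_1\equiv 0$. The paper's proof is in fact a two-line sketch stating exactly these identities; your version is more careful about why they hold (in particular the boundary bookkeeping when the trailing run is itself a multiple of $t^{\star}+1$), and your alternative inductive argument is a nice robustness check, but it is not needed to match the paper.
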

	
	\begin{proof}
		The result in \eqref{eq-time_seq_prob} follows immediately from the formula in \eqref{eq-hist_prob_general} by observing that $N_e^{\text{succ}}(t)=Y_e^{t^{\star}}\!(t)+1-\delta_{Z_e^{t^{\star}}\!(t),0}$ and $N_e^{\text{req}}(t)=t-(t^{\star}+1)Y_e^{t^{\star}}\!(t)-Z_e^{t^{\star}}\!(t)$. For $t^{\star}=\infty$, we always only have trailing ones in the elementary link status sequences, so that $Y_e^{\infty}(t)(h^t)=0$ for all $t\geq 1$ and every history $h^t$. The result in \eqref{eq-time_seq_prob_infty} then follows.
	\end{proof}
	
	Next, let us count the number of elementary link status sequences with non-zero probability. Using Table~\ref{table-link_cutoff_example} as a guide, we obtain the following.
	
	\begin{lemma}\label{prop-num_time_seq}
		For every time $t\geq 1$ and every cutoff $t^{\star}\in\mathbb{N}_0\cup\{\infty\}$, let $\Xi(t;t^{\star})$ denote the set of elementary link status sequences for the $t^{\star}$ memory-cutoff policy that have non-zero probability. Then, for $t^{\star}\in\mathbb{N}_0$, the number of elements in the set $\Xi(t;t^{\star})$ is
		\begin{equation}\label{eq-num_time_seq}
			\abs{\Xi(t;t^{\star})}=\sum_{x=0}^{\floor{\frac{t-1}{t^{\star}+1}}}\sum_{k=0}^{t^{\star}+1}\left(\binom{t-1-xt^{\star}}{x}\delta_{k,0}+(1-\delta_{k,0})\binom{t-k-xt^{\star}}{x}\boldsymbol{1}_{t-k-x(t^{\star}+1)\geq 0}\right),
		\end{equation}
		where
		\begin{equation}
			\boldsymbol{1}_{t-k-x(t^{\star}+1)\geq 0}=\left\{\begin{array}{l l} 1 & \text{if }t-k-x(t^{\star}+1)\geq 0,\\0 & \text{otherwise}. \end{array}\right.
		\end{equation}
		For $t^{\star}=\infty$, $\abs{\Xi(t;\infty)}=1+t$.~\defqed
	\end{lemma}
	
	\begin{proof}
		We start by counting the number of elementary link status sequences when the number of trailing ones is equal to zero, i.e., when $k\equiv Z_e^{t^{\star}}\!(t)(h^t)=0$. If we also let the number $x\equiv Y_e^{t^{\star}}\!(t)(h^t)$ of full blocks of ones in time $t-1$ be equal to one, then there are $t^{\star}+1$ ones and $t-t^{\star}-2$ zeros up to time $t-1$. The total number of elementary link status sequences is then equal to the number of ways that the single block of ones can be moved around in the elementary link status sequence up to time $t-1$. This quantity is equivalent to the number of permutations of $t-1-t^{\star}$ objects with $t-t^{\star}-2$ of them being identical (these are the zeros), which is given by
		\begin{equation}
			\frac{(t-1-t^{\star})!}{(t-2-t^{\star})!(t-1-t^{\star}-t+t^{\star}+2)!}=\frac{(t-1-t^{\star})!}{(t-t^{\star}-2)!(1)!}=\binom{t-1-t^{\star}}{1}.
		\end{equation}
		We thus have the $x=0$ and $k=0$ term in the sum in \eqref{eq-num_time_seq}. If we stick to $k=0$ but now consider more than one full block of ones in time $t-1$ (i.e., let $x\equiv Y_e^{t^{\star}}\!(t)(h^t)\geq 1$), then the number of elementary link status sequences is given by a similar argument as before: it is equal to the number of ways of permuting $t-1-xt^{\star}$ objects, with $x$ of them being identical (the blocks of ones) and the remaining $t-1-x(t^{\star}+1)$ objects also identical (the number of zeros), i.e., $\binom{t-1-xt^{\star}}{x}$. The total number of elementary link status sequences with zero trailing ones is therefore
		\begin{equation}\label{eq-num_time_seq_pf1}
			\sum_{x=0}^{\floor{\frac{t-1}{t^{\star}+1}}}\binom{t-1-xt^{\star}}{x}.
		\end{equation}
		
		Let us now consider the case $k\equiv Z_e^{t^{\star}}\!(t)(h^t)>0$. Then, the number of time slots in which full blocks of ones can be shuffled around is $t-k$. If there are $x$ blocks of ones in time $t-k$, then by the same arguments as before, the number of such elementary link status sequences is given by the number of ways of permuting $t-k-xt^{\star}$ objects, with $x$ of them being identical (the full blocks of ones) and the remaining $t-k-x(t^{\star}+1)$ of them also identical (these are the zeros up to time $t-k$). In other words, the number of elementary link status sequences with $k>0$ and $x\geq 0$ is
		\begin{equation}\label{eq-num_time_seq_pf2}
			\binom{t-k-xt^{\star}}{x}\boldsymbol{1}_{t-k-x(t^{\star}+1)\geq 0}.
		\end{equation}
		We must put the indicator function $\boldsymbol{1}_{t-k-x(t^{\star}+1)\geq 0}$ in order to ensure that the binomial coefficient makes sense. This also means that, depending on the time $t$, not all values of $k$ between 0 and $t^{\star}+1$ can be considered in the total number of elementary link status sequences (simply because it might not be possible to fit all possible values of trailing ones and full blocks of ones within that amount of time). By combining \eqref{eq-num_time_seq_pf1} and \eqref{eq-num_time_seq_pf2}, we obtain the desired result.
		
		In the case $t^{\star}=\infty$, because there are never any full blocks of ones and only trailing ones, we have $t$ elementary link status sequences, each containing $k$ trailing ones, where $1\leq k\leq t$. We also have an elementary link status sequence consisting of all zeros, giving a total of $t+1$ elementary link status sequences.
	\end{proof}
	\smallskip
	\begin{remark}
		Note that when $t^{\star}=0$, we get
		\begin{align}
			\abs{\Xi(t;0)}&=\sum_{x=0}^{t-1}\sum_{k=0}^1\left(\binom{t-1}{x}\delta_{k,0}+(1-\delta_{k,0})\binom{t-k}{x}\boldsymbol{1}_{t-k-x\geq 0}\right)\\
			&=\sum_{x=0}^{t-1}\binom{t-1}{x}+\sum_{x=0}^{t-1}\binom{t-1}{x}\underbrace{\boldsymbol{1}_{t-1-x\geq 0}}_{1~\forall x}\\
			&=2^{t-1}+2^{t-1}\\
			&=2^t.
		\end{align}
		In other words, when $t^{\star}=0$, \textit{all} $t$-bit strings are valid elementary link status sequences. 
		
		For $t\leq t^{\star}+1$, no full blocks of ones in time $t-1$ are possible, so we get
		\begin{align}
			\abs{\Xi(t;t^{\star})}&=\sum_{k=0}^{t^{\star}+1}\left(\binom{t-1}{0}\delta_{k,0}+(1-\delta_{k,0})\binom{t-k}{0}\boldsymbol{1}_{t-k\geq 0}\right)\\
			&=\binom{t-1}{0}+\sum_{k=1}^t \binom{t-k}{0}\\
			&=1+t.
		\end{align}
		This coincides with the result for $t^{\star}=\infty$, because when $t^{\star}=\infty$ the condition $t\leq t^{\star}+1$ is satisfied for all $t\geq 1$.~\defqed
	\end{remark}

\subsection{Proof of Theorem~\ref{thm-mem_status_pr}}\label{sec-mem_status_pr_pf}

	We start with the proof of the claimed expressions for $\Pr[M_e(t)=m,X_e(t)=1]_{t^{\star}}$. For $t\leq t^{\star}+1$, because no full blocks of ones up to time $t-1$ are possible, the possible values for the memory time are $0,1,\dotsc,t-1$. Furthermore, for each value of $m\in\{0,1,\dotsc,t-1\}$, there is only one elementary link status sequence for which $M_e(t)=m$, and this sequence has $Z_e^{t^{\star}}\!(t)=m+1$ trailing ones and thus probability $p_e(1-p_e)^{t-1-m}$ by Proposition~\ref{prop-time_seq_prob}.
		
	For $t>t^{\star}+1$, the possible values of the memory time are $0,1,\dotsc,t^{\star}$. Consider the number $Y_e^{t^{\star}}\!(t)$ of full blocks of ones in time $t-1$ and the number $Z_e^{t^{\star}}\!(t)$ of trailing ones in elementary link status sequences $(x_1,x_2,\dotsc,x_t)$ such that $x_t=1$. Since we must have $x_t=1$, we require $Z_e^{t^{\star}}\!(t)\geq 1$. Now, in order to have a memory time of $M_e(t)=m\in\{0,1,\dotsc,t^{\star}\}$, we can have elementary link status sequences consisting of any number $x=Y_e^{t^{\star}}\!(t)$ of full blocks of ones ranging from 0 to $\floor{\frac{t-1}{t^{\star}+1}}$ as long as $Z_e^{t^{\star}}\!(t)=m+1$. (Note that at the end of each full block of ones the memory time is equal to $t^{\star}$.) The number of such elementary link status sequences is
	\begin{equation}
		\binom{t-(m+1)-xt^{\star}}{x}\boldsymbol{1}_{t-(m+1)-x(t^{\star}+1)\geq 0},
	\end{equation}
	as given by \eqref{eq-num_time_seq_pf2}, and the probability of each such elementary link status sequence is $p_e^{x+1}(1-p_e)^{t-(m+1)-x(t^{\star}+1)}$. By summing with respect to all $0\leq x\leq\floor{\frac{t-1}{t^{\star}+1}}$, we obtain the desired result.

	We now prove the claimed expressions for $\Pr[M_e(t)=m,X_e(t)=0]_{t^{\star}}$. For finite $t^{\star}$, when $t\leq t^{\star}+1$, there is only one elementary link status sequence ending with a zero, and that is the sequence consisting of all zeros, which has probability $(1-p_e)^t$. Furthermore, since the value of the memory for this sequence is equal to $-1$, only the case $M_e(t)=-1$ has non-zero probability. When $t>t^{\star}+1$, we can again have non-zero probability only for $M_e(t)=-1$. In this case, because every elementary link status sequence has to end with a zero, we must have $Z_e^{t^{\star}}\!(t)=0$. Therefore, using \eqref{eq-time_seq_prob}, along with \eqref{eq-num_time_seq_pf1}, we obtain the desired result.
		
	For $t^{\star}=\infty$, only the elementary link status sequence consisting of all zeros ends with a zero, and in this case we have $M_e(t)=-1$. The result then follows.

\subsection{Proof of Theorem~\ref{thm-avg_mem_status_inf}}\label{app-avg_mem_status_inf_pf}

	We start by proving \eqref{eq-link_status_avg_inf}. Since we consider the limit $t\to\infty$, it suffices to consider the expression for $\Pr[X_e(t)=1]_{t^{\star}}$ in \eqref{eq-link_status_Pr1} for $t>t^{\star}+1$. Also due to the $t\to\infty$ limit, we can disregard the indicator function in \eqref{eq-link_status_Pr1}, so that
	\begin{equation}\label{eq-link_status_avg_inf_pf0}
		\lim_{t\to\infty}\mathbb{E}[X_e(t)]_{t^{\star}}=\lim_{t\to\infty}\sum_{x=0}^{\floor{\frac{t-1}{t^{\star}+1}}}\sum_{k=1}^{t^{\star}+1}\binom{t-k-xt^{\star}}{x}p_e^{x+1}(1-p_e)^{t-k-(t^{\star}+1)x}.
	\end{equation}
	Next, consider the binomial expansion of $(1-p_e)^{t-k-(t^{\star}+1)x}$:
	\begin{equation}
		(1-p_e)^{t-k-(t^{\star}+1)x}=\sum_{j=0}^{\infty}\binom{t-k-(t^{\star}+1)x}{j}(-1)^j p_e^j.
	\end{equation}
	Substituting this into \eqref{eq-link_status_avg_inf_pf0} gives us
	\begin{align}
		\lim_{t\to\infty}\mathbb{E}[X_e(t)]_{t^{\star}}&=p_e\lim_{t\to\infty}\sum_{x,j=0}^{\infty}\sum_{k=1}^{t^{\star}+1}\binom{t-k-t^{\star}x}{x}\binom{t-k-(t^{\star}+1)x}{j}(-1)^j p_e^{x+j}\\
		&=p_e\lim_{t\to\infty}\sum_{\ell=0}^{\infty}\sum_{j=0}^{\ell}\sum_{k=1}^{t^{\star}+1}\binom{t-k-t^{\star}j}{j}\binom{t-k-(t^{\star}+1)j}{\ell-j}(-1)^{\ell-j} p_e^{\ell}.\label{eq-link_status_avg_inf_pf4}
	\end{align}
	Now, for brevity, let $a\equiv t-k$, and let us focus on the sum
	\begin{equation}\label{eq-link_status_avg_inf_pf1}
		\sum_{j=0}^{\ell}(-1)^{\ell-j}\binom{a-t^{\star}j}{j}\binom{a-t^{\star}j-j}{\ell-j}.
	\end{equation}
	We start by expanding the binomial coefficients to get
	\begin{align}
		\binom{a-t^{\star}j}{j}\binom{a-t^{\star}j-j}{\ell-j}&=\frac{(a-t^{\star}j)!}{j!(\ell-j)!(a-t^{\star}j-\ell)!}\\
		&=\frac{1}{j!(\ell-j)!}\prod_{s=0}^{\ell-1}(a-t^{\star}j-s)\\
		&=\frac{1}{\ell!}\binom{\ell}{j}\prod_{s=0}^{\ell-1}(a-t^{\star}j-s).
	\end{align}
	Next, we have
	\begin{equation}
		\prod_{s=0}^{\ell-1}(a-t^{\star}j-s)=\sum_{n=0}^{\ell}(-1)^{\ell-n}\begin{bmatrix}\ell\\n\end{bmatrix}(a-t^{\star}j)^n,
	\end{equation}
	where $\begin{bmatrix}\ell\\n\end{bmatrix}$ is the (unsigned) Stirling number of the first kind\footnote{This number is defined to be the number of permutations of $\ell$ elements with $n$ disjoint cycles.}. Performing the binomial expansion of $(a-t^{\star}j)^n$, the sum in \eqref{eq-link_status_avg_inf_pf1} becomes
	\begin{equation}\label{eq-link_status_avg_inf_pf3}
		\sum_{j=0}^{\ell}\sum_{n=0}^{\ell}\sum_{i=0}^n (-1)^{\ell-j}\frac{1}{\ell!}\binom{\ell}{j}\begin{bmatrix}\ell\\n\end{bmatrix}\binom{n}{i}(-1)^i(t^{\star})^i j^i a^{n-i}.
	\end{equation}
	Now, it holds that
	\begin{equation}\label{eq-link_status_avg_inf_pf2}
		\sum_{j=0}^{\ell}(-1)^{\ell-j}\frac{1}{\ell!}\binom{\ell}{j}j^i=(-1)^{2\ell}\begin{Bmatrix}i\\\ell\end{Bmatrix},
	\end{equation}
	where $\begin{Bmatrix}i\\\ell\end{Bmatrix}$ is the Stirling number of the second kind\footnote{This number is defined to be the number of ways to partition a set of $i$ objects into $\ell$ non-empty subsets.}. For $i<\ell$, it holds that $\begin{Bmatrix}i\\\ell\end{Bmatrix}=0$, and $\begin{Bmatrix}\ell\\\ell\end{Bmatrix}=1$. Since $i$ ranges from 0 to $n$, and $n$ itself ranges from 0 to $\ell$, the sum in \eqref{eq-link_status_avg_inf_pf2} is zero except for when $i=\ell$. The sum in \eqref{eq-link_status_avg_inf_pf2} is therefore effectively equal to $(-1)^{2\ell}\delta_{i,\ell}$. Substituting this into \eqref{eq-link_status_avg_inf_pf3} leads to
	\begin{equation}
		\sum_{n=0}^{\ell}\sum_{i=0}^n(-1)^{2\ell}\delta_{i,\ell}\begin{bmatrix}\ell\\n\end{bmatrix}\binom{n}{i}(-1)^i(t^{\star})^i a^{n-i}=(-1)^{\ell}(t^{\star})^{\ell},
	\end{equation}
	where we have used the fact that $\begin{bmatrix}\ell\\\ell\end{bmatrix}=1$. Altogether, we have shown that
	\begin{equation}\label{eq-special_sum}
		\sum_{j=0}^{\ell}(-1)^{\ell-j}\binom{a-t^{\star}j}{j}\binom{a-t^{\star}j-j}{\ell-j}=(-1)^{\ell}(t^{\star})^{\ell}
	\end{equation}
	for all $\ell\geq 0$. The sum is independent of $a=t-k$. Substituting this result into \eqref{eq-link_status_avg_inf_pf4}, and using the fact that
	\begin{equation}\label{eq-homographic_func}
		\sum_{\ell=0}^{\infty} (-1)^{\ell} x^{\ell}=\frac{1}{1+x},\quad x\neq -1,
	\end{equation}
	we get
	\begin{equation}
		\lim_{t\to\infty}\mathbb{E}[X_e(t)]_{t^{\star}}=p_e\sum_{\ell=0}^{\infty}\sum_{k=1}^{t^{\star}+1}(-1)^{\ell}(t^{\star}p_e)^{\ell}=p_e(t^{\star}+1)\sum_{\ell=0}^{\infty} (-1)^{\ell} (t^{\star}p_e)^{\ell}=\frac{(t^{\star}+1)p_e}{1+t^{\star}p_e},
	\end{equation}
	as required.

	The proof of \eqref{eq-avg_mem_status_inf} is very similar to the proof of \eqref{eq-link_status_avg_inf}. Using the result of Theorem~\ref{thm-mem_status_pr}, in the limit $t\to\infty$ we have
	\begin{multline}
		\lim_{t\to\infty}\Pr[M_e(t)=m,X_e(t)=1]_{t^{\star}}\\=(1-\delta_{m,-1})\lim_{t\to\infty}\sum_{x=0}^{\infty} \binom{t-(m+1)-xt^{\star}}{x}p_e^{x+1}(1-p_e)^{t-(m+1)-x(t^{\star}+1)}.
	\end{multline}
	Using the binomial expansion of $(1-p_e)^{t-(m+1)-x(t^{\star}+1)}$, exactly as in the proof of \eqref{eq-link_status_avg_inf}, we can write
	\begin{align}
		&\lim_{t\to\infty}\Pr[M_e(t)=m,X_e(t)=1]_{t^{\star}}\nonumber\\
		&\qquad=(1-\delta_{m,-1})\lim_{t\to\infty}\sum_{x=0}^{\infty}\sum_{j=0}^{\infty} p_e\binom{t-(m+1)-xt^{\star}}{x}\binom{t-(m+1)-(t^{\star}+1)x}{j}(-1)^jp_e^{x+j}\\
		&\qquad=(1-\delta_{m,-1})\lim_{t\to\infty}\sum_{\ell=0}^{\infty}\sum_{j=0}^{\ell}p_e\binom{t-(m+1)-jt^{\star}}{j}\binom{t-(m+1)-(t^{\star}+1)j}{\ell-j}(-1)^{\ell-j}p_e^{\ell}.
	\end{align}
	Then, using \eqref{eq-special_sum}, we have that
	\begin{equation}
		\sum_{j=0}^{\ell}(-1)^{\ell-j}\binom{t-(m+1)-jt^{\star}}{j}\binom{t-(m+1)-(t^{\star}+1)j}{\ell-j} =(-1)^{\ell}(t^{\star})^{\ell}
	\end{equation}
	for all $t\geq 1$ and all $m\in\{0,1,\dotsc,t^{\star}\}$. Finally, using \eqref{eq-homographic_func}, we obtain
	\begin{equation}
		\lim_{t\to\infty}\Pr[M_e(t)=m,X_e(t)=1]_{t^{\star}}=p_e\sum_{\ell=0}^{\infty} (-1)^{\ell}(t^{\star}p_e)^{\ell}=\frac{p_e}{1+t^{\star}p_e},
	\end{equation}
	as required. The proof of \eqref{eq-mem_time_prob_infty_x0} is similar.

\subsection{Other figures of merit}\label{sec-other_figs_of_merit_mem_cutoff}

	We now consider the figures of merit defined in Section~\ref{sec-other_figs_of_merit} in the context of the memory-cutoff policy.

\subsubsection{Waiting time}\label{app-avg_waiting_time_req_pf}

	Let us consider the expected waiting time for an elementary link in a quantum network undergoing the memory-cutoff policy.
	
	\begin{theorem}\label{thm-avg_waiting_time_req}
		Consider an edge $e\in E$ in the graph $G=(V,E)$ corresponding to the elementary links of a quantum network, and let $p_e\in[0,1]$ be the success probability for the elementary link corresponding to $e$. For every cutoff $t^{\star}\in\mathbbm{N}_0$ and every request time $t_{\text{req}}\geq 0$, the expected waiting time for the elementary link is
		\begin{equation}\label{eq-avg_waiting_time_req}
			\mathbb{E}[W_{e}(t_{\text{req}})]_{t^{\star}}=\frac{\Pr[M_e(t_{\text{req}}+1)=-1,X_e(t_{\text{req}}+1)=0]_{t^{\star}}}{p_e(1-p_e)}.
		\end{equation}
		For $t^{\star}=\infty$,
		\begin{equation}
			\mathbb{E}[W_{e}(t_{\text{req}})]_{\infty}=\frac{\Pr[X_e(t_{\text{req}}+1)=0]_{\infty}}{p_e(1-p_e)}=\frac{(1-p_e)^{t_{\text{req}}}}{p_e}. \defqedspec
		\end{equation}
	\end{theorem}
	
	
	As a check, let us observe the following:
	\begin{itemize}
		\item If $t_{\text{req}}=0$, then because $\Pr[M_e(1)=-1,X_e(1)=0]_{t^{\star}}=1-p_e$ for all $t^{\star}\in\mathbbm{N}_0$ (see Theorem~\ref{thm-mem_status_pr}), we obtain $\mathbb{E}[W_e(0)]_{t^{\star}}=\frac{1}{p_e}$, as expected. We get the same result for $t^{\star}=\infty$.
		
		\item If $t^{\star}=0$, then we get $\Pr[M_e(t_{\text{req}}+1)=-1,X_e(t_{\text{req}}+1)=0]_{0}=1-p_e$ for all $t_{\text{req}}\geq 0$ (see Theorem~\ref{thm-mem_status_pr}), which means that $\mathbb{E}[W_{e}(t_{\text{req}})]_0=\frac{1}{p_e}$ for all $t_{\text{req}}\geq 0$. This makes sense, because in the $t^{\star}=0$ memory-cutoff policy the quantum state of the elementary link is never held in memory. 
	\end{itemize}

	\begin{proof}
		Using \eqref{eq-waiting_time_prob_late_request}, we have
		\begin{multline}
			\Pr[W_{e}(t_{\text{req}})=t]_{t^{\star}}=\Pr[X_e(t_{\text{req}}+1)=0,\dotsc,X_e(t_{\text{req}}+t)=1]_{t^{\star}}\\=\sum_{m_1,\dotsc,m_t=0}^{t^{\star}}\Pr[X_e(t_{\text{req}}+1)=0,M_e(t_{\text{req}}+1)=m_1,\dotsc,\\X_e(t_{\text{req}}+t)=1,M_e(t_{\text{req}}+t)=m_t]_{t^{\star}}.
		\end{multline}
		Using the transition matrix $T_e(t^{\star})$ defined in \eqref{eq-cutoff_trans_prob6}, we obtain
		\begin{multline}
			\Pr[W_{e}(t_{\text{req}})=t]_{t^{\star}}\\=\sum_{m_1,\dotsc,m_t=0}^{t^{\star}} (T_e(t^{\star}))_{\substack{1,m_t\\0,m_{t-1}}}\dotsb (T_e(t^{\star}))_{\substack{0,m_3\\0,m_2}}(T_e(t^{\star}))_{\substack{0,m_2\\0,m_1}}\\\times\Pr[M_e(t_{\text{req}}+1)=m_1,X_e(t_{\text{req}}+1)=0]_{t^{\star}}.
		\end{multline}
		Using \eqref{eq-mem_time_prob_x0}, along with \eqref{eq-cutoff_trans_prob1}--\eqref{eq-cutoff_trans_prob6}, we have that
		\begin{equation}
			\Pr[W_{e}(t_{\text{req}})=t]_{t^{\star}}=\Pr[M_e(t_{\text{req}}+1)=-1,X_e(t_{\text{req}}+1)=0]_{t^{\star}}p_e(1-p_e)^{t-2},
		\end{equation}
		for all $t\geq 1$. The result then follows.
		
		For $t^{\star}=\infty$, using the transition matrix $T_e(\infty)$ defined in \eqref{eq-cutoff_trans_infty} leads to
		\begin{multline}
			\Pr[X_e(t_{\text{req}}+1)=0,\dotsc,X_e(t_{\text{req}}+t)=1]_{\infty}\\ = (T_e(\infty))_{\substack{1\\0}}(T_e(\infty))_{\substack{0\\0}}\dotsb(T_e(\infty))_{\substack{0\\0}}\Pr[X_e(t_{\text{req}}+1)=0]_{\infty}.
		\end{multline}
		Then, from \eqref{eq-link_status_Pr1}, we have that $\Pr[X_e(t_{\text{req}}+1)=0]=(1-p_e)^{t_{\text{req}}+1}$, so that
		\begin{equation}
			\Pr[W_{e}(t_{\text{req}})=t]_{\infty}=p_e(1-p_e)^{t-2}(1-p_e)^{t_{\text{req}}+1}
		\end{equation}
		for all $t\geq 1$. The result then follows.
	\end{proof}
	
	\begin{figure}
		\centering
		\includegraphics[scale=1]{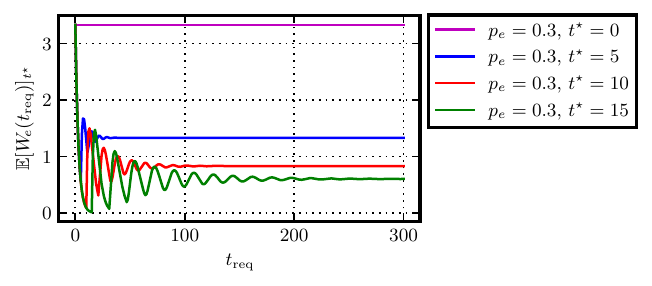}
		\caption{The expected waiting time for an elementary link, given by \eqref{eq-avg_waiting_time_req}, as a function of the request time $t_{\text{req}}$. We let the success probability for the elementary link be $p_e=0.3$, and we take various values for the cutoff $t^{\star}$.}\label{fig-avg_waiting_time_example}
	\end{figure}
	
	In the limit $t_{\text{req}}\to\infty$, we obtain using \eqref{eq-mem_time_prob_infty_x0},
	\begin{equation}\label{eq-wait_time_treqInfty}
		\lim_{t_{\text{req}}\to\infty}\mathbb{E}[W_{e}(t_{\text{req}})]_{t^{\star}}=\frac{1}{p_e(1+t^{\star}p_e)},\quad t^{\star}\in\mathbbm{N}_0.
	\end{equation}
	See Figure~\ref{fig-avg_waiting_time_example} for plots of the expected waiting time, given by \eqref{eq-avg_waiting_time_req}, as a function of the request time $t_{\text{req}}$ for various values of $t^{\star}$. As long as $t^{\star}$ is strictly greater than zero, the waiting time is strictly less than $\frac{1}{p_e}$, despite the oscillatory behavior for small values of $t_{\text{req}}$. In the limit $t_{\text{req}}\to\infty$, we see that the waiting time is monotonically decreasing with increasing $t^{\star}$, which is also apparent from \eqref{eq-wait_time_treqInfty}.

\subsubsection{Success rate}\label{app-thm-avg_sucess_rate_pf}

	Let us now consider the expected success rate for an elementary link undergoing the memory-cutoff policy.
	
	\begin{theorem}\label{thm-avg_sucess_rate}
		Consider an edge $e\in E$ in the graph $G=(V,E)$ corresponding to the elementary links of a quantum network, and let $p_e\in[0,1]$ be the success probability for the elementary link corresponding to $e$. For every cutoff $t^{\star}\in\mathbb{N}_0\cup\{\infty\}$ and every time $t\geq 1$, the expected success rate for the elementary link is
		\begin{equation}
			\mathbb{E}[S_{\!e}(t)]_{t^{\star}}=\sum_{j=0}^{t-1}\frac{1}{j+1}p_e(1-p_e)^j,\quad t\leq t^{\star}+1.
		\end{equation}
		For $t>t^{\star}+1$,
		\begin{multline}
			\mathbb{E}[S_{\!e}(t)]_{t^{\star}}=\sum_{x=0}^{\floor{\frac{t-1}{t^{\star}+1}}}\left(\frac{x}{t-t^{\star}x}\binom{t-1-xt^{\star}}{x}p_e^x(1-p_e)^{t-(t^{\star}+1)x}\right.\\\left.+\sum_{k=1}^{t^{\star}+1}\frac{x+1}{t-k-t^{\star}x+1}\binom{t-k-xt^{\star}}{x}p_e^{x+1}(1-p_e)^{t-k-(t^{\star}+1)x}\boldsymbol{1}_{t-k-(t^{\star}+1)x\geq 0}\right). \defqedspec
		\end{multline}
	\end{theorem}
	
	\begin{proof}
		We start with the observation that, for every history $h^t$, the number of successful requests can be written in terms of the number $Y_e^{t^{\star}}\!(t)(h^t)$ of blocks of ones of length $t^{\star}+1$ and the number $Z_e^{t^{\star}}\!(t)(h^t)$ of trailing ones in the elementary link status sequence corresponding to $h^t$ as
		\begin{equation}
			Y_e^{t^{\star}}\!(t)(h^t)+1-\delta_{Z_e^{t^{\star}}\!(t)(h^t),0}.
		\end{equation}
		Similarly, the total number of failed requests is
		\begin{equation}
			t-Z_e^{t^{\star}}\!(t)(h^t)-(t^{\star}+1)Y_e^{t^{\star}}\!(t)(h^t).
		\end{equation}
		Therefore,
		\begin{align}
			S_{\!e}(t)(h^t)&=\frac{Y_e^{t^{\star}}\!(t)(h^t)+1-\delta_{Z_e^{t^{\star}}\!(t)(h^t),0}}{t-Z_e^{t^{\star}}\!(t)(h^t)-(t^{\star}+1)Y_e^{t^{\star}}\!(t)(h^t)+Y_e^{t^{\star}}\!(t)(h^t)+1-\delta_{Z_e^{t^{\star}}\!(t)(h^t),0}}\\
			&=\frac{Y_e^{t^{\star}}\!(t)(h^t)+1-\delta_{Z_e^{t^{\star}}\!(t)(h^t),0}}{t-Z_e^{t^{\star}}\!(t)(h^t)-t^{\star}Y_e^{t^{\star}}\!(t)(h^t)+1-\delta_{Z_e^{t^{\star}}\!(t)(h^t),0}}.\label{eq-avg_success_rate_pf1}
		\end{align}
		Now, for $t\leq t^{\star}+1$, we always have $Y_e^{t^{\star}}\!(t)(h^t)=0$ for every history $h^t$, and the elementary link status sequence can consist only of a positive number of trailing ones not exceeding $t$. Thus, from Proposition~\ref{prop-time_seq_prob}, the probability of any such history is $p_e(1-p_e)^{t-Z_e^{t^{\star}}\!(t)(h^t)}$. Using \eqref{eq-avg_success_rate_pf1} then leads to
		\begin{align}
			\mathbb{E}[S_{\!e}(t)]_{t^{\star}}&=\sum_{h^t\in\{0,1\}^{2t-1}}S_{\!e}(t)(h^t)\Pr[H_e(t)=h^t]_{t^{\star}}\\
			&=\sum_{k=1}^t \frac{1}{t-k+1}p_e(1-p_e)^{t-k}\\
			&=\sum_{j=0}^{t-1}\frac{1}{j+1}p_e(1-p_e)^j
		\end{align}
		for $t\leq t^{\star}+1$, as required, where the last equality follows by a change of summation variable.
		
		For $t>t^{\star}+1$, we use \eqref{eq-avg_success_rate_pf1} again, keeping in mind this time that the number of trailing ones can be equal to zero, to get
		\begin{align}
			\mathbb{E}[S_{\!e}(t)]_{t^{\star}}&=\sum_{h^t\in\{0,1\}^{2t-1}}S_{\!e}(t)(h^t)\Pr[H_e(t)=h^t]_{t^{\star}}\\
			&=\sum_{h^t:Z_e^{t^{\star}}\!(t)(h^t)=0}S_{\!e}(t)(h^t)\Pr[H_e(t)=h^t]_{t^{\star}}\nonumber\\
			&\qquad\qquad\qquad\qquad+\sum_{h^t:Z_e^{t^{\star}}\!(t)(h^t)\geq 1}S_{\!e}(t)(h^t)\Pr[H_e(t)=h^t]_{t^{\star}}\\
			&=\sum_{x=0}^{\floor{\frac{t-1}{t^{\star}+1}}}\left(\frac{x}{t-t^{\star}x}\Pr[H_e(t)=h^t:Y_e^{t^{\star}}\!(t)(h^t)=x,Z_e^{t^{\star}}\!(t)(h^t)=0]_{t^{\star}}\right.\nonumber\\
			&\quad\left.+\sum_{k=1}^{t^{\star}+1}\frac{x+1}{t-k-t^{\star}x+1}\Pr[H_e(t)=h^t:Y_e^{t^{\star}}\!(t)(h^t)=x,Z_e^{t^{\star}}\!(t)(h^t)=k]_{t^{\star}}\right).
		\end{align}
		Using Proposition~\ref{prop-time_seq_prob}, we arrive at the desired result.
	\end{proof}
	
	\begin{figure}
		\centering
		\includegraphics[scale=1]{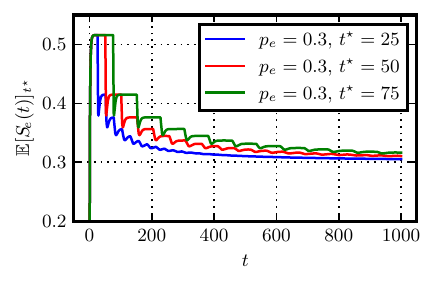}
		\caption{The expected success rate, as given by the expressions in Theorem~\ref{thm-avg_sucess_rate}, for an elementary link with success probability $p_e=0.3$ undergoing various memory-cutoff policies.}\label{fig-avg_success_rate_example}
	\end{figure}
	
	See Figure~\ref{fig-avg_success_rate_example} for a plot of the expected success rate $\mathbb{E}[S_{\!e}(t)]_{t^{\star}}$ as a function of time for various values of the cutoff $t^{\star}$. We find that the rate has essentially the shape of a decaying square wave, which is clearer for larger values of the cutoff. In particular, the ``plateaus'' in the curves have a period of $t^{\star}+1$ time steps. Consider the values of these pleateaus. The largest plateau can be found by considering the case $t^{\star}=\infty$, because in this case the condition $t\leq t^{\star}+1$ is satisfied for all $t\geq 1$, and it is when this condition is true that the largest plateau occurs. Using Theorem~\ref{thm-avg_sucess_rate} with $t^{\star}=\infty$, we find that the value of the largest plateau approaches
	\begin{equation}\label{eq-avg_succ_rate_tInfty_1}
		\lim_{t\to\infty}\mathbb{E}[S_{\!e}(t)]_{\infty}=\lim_{t\to\infty}\sum_{j=0}^{t-1}\frac{1}{j+1}p_e(1-p_e)^j=-\frac{p_e\ln p_e}{1-p_e},
	\end{equation}
	for all $p_e\in(0,1)$. In the case $t^{\star}\in\mathbb{N}_0$, as we see in Figure~\ref{fig-avg_success_rate_example}, there are multiple plateaus, with each plateau lasting for a period of $t^{\star}+1$ time steps, as mentioned earlier. The values of these pleateaus depend on the number $x\geq 0$ of full blocks of ones in the elementary link status sequence (see Appendix~\ref{app-mem_cutoff_details} for details). Specifically, the values of the plateaus approach
	\begin{multline}
		\lim_{t\to\infty}\sum_{k=1}^{t-(t^{\star}+1)x}\frac{x+1}{t-k-t^{\star}x+1}\binom{t-k-t^{\star}x}{x}p_e^{x+1}(1-p_e)^{t-k-(t^{\star}+1)x}\\=\lim_{t\to\infty}\sum_{j=(t^{\star}+1)x}^{t-1}\frac{x+1}{j-t^{\star}x+1}\binom{j-t^{\star}x}{x}p_e^{x+1}(1-p_e)^{j-(t^{\star}+1)x}=p_e\cdot{~}_2F_1(1,1,2+x,1-p_e),
	\end{multline}
	for all $x\geq 0$, where $_2F_1(a,b,c,z)$ is the hypergeometric function. Then, using the fact that $\lim_{x\to\infty} {~}_2F_1(1,1,2+x,1-p_e)=1$ \cite{CAJ17}, we conclude that the plateaus approach the value of $p_e$, i.e.,
	\begin{equation}\label{eq-avg_succ_rate_tInfty_2}
		\lim_{t\to\infty}\mathbb{E}[S_{\!e}(t)]_{t^{\star}}=p_e,\quad t^{\star}\in\mathbb{N}_0.
	\end{equation}

\end{appendices}

\newpage

\printbibliography[heading=bibintoc]

\end{document}